\definecolor{light-blue}{rgb}{0.8,0.85,1}
\definecolor{light-red}{rgb}{1,.4,.4}
\definecolor{purp}{rgb}{.7,.3,1}
\definecolor{yel}{rgb}{1,1,.5}
\definecolor{cy}{rgb}{0,1,1}
\newtheorem{theorem}{Theorem}
\newtheorem{corollary}[theorem]{Corollary}
\newtheorem{lemma}[theorem]{Lemma}
\newtheorem{proposition}[theorem]{Proposition}
\theoremstyle{definition}
\newtheorem{example}[theorem]{Example}
\newtheorem{remark}[theorem]{Remark}
\newcommand{\co}{\colon\,}
\newcommand{\bT}{\mathbb T}
\newcommand{\bR}{\mathbb R}
\newcommand{\bC}{\mathbb C}
\newcommand{\bF}{\mathbb F}
\newcommand{\bZ}{\mathbb Z}
\newcommand{\bP}{\mathbb P}
\newcommand{\bQ}{\mathbb Q}
\newcommand{\cP}{\mathcal P}
\newcommand{\fp}{\mathfrak p}
\newcommand{\fg}{\mathfrak g}
\newcommand{\fk}{\mathfrak k}
\newcommand{\fu}{\mathfrak u}
\newcommand{\SO}{\mathop{\rm SO}}
\newcommand{\SU}{\mathop{\rm SU}}
\newcommand{\U}{\mathop{\rm U}}
\newcommand{\Sp}{\mathop{\rm Sp}}
\newcommand{\Spin}{\mathop{\rm Spin}}
\newcommand{\PSU}{\mathop{\rm PSU}}
\newcommand{\PSp}{\mathop{\rm PSp}}
\newcommand{\PSO}{\mathop{\rm PSO}}
\newcommand{\tG}{\widetilde G}
\newcommand{\tT}{\widetilde T}
\newcommand{\tR}{\widetilde R}
\newcommand{\pt}{\text{pt}}
\newcommand{\lp}{\textup{(}}
\newcommand{\rp}{\textup{)}}
\newcommand{\dual}{^{\vee}}
\newcommand{\Ext}{\operatorname{Ext}}
\newcommand{\Hom}{\operatorname{Hom}}
\newcommand{\Sq}{\operatorname{Sq}}
\newcommand{\diag}{\operatorname{diag}}
\newcommand{\image}{\operatorname{image}}
\newcommand{\lcm}{\operatorname{lcm}}
\title{Group dualities, T-dualities, and twisted $K$-theory}
\author{Varghese Mathai}
\address{Department of Pure Mathematics\\
School of Mathematical Sciences\\
University of Adelaide\\
Adelaide, SA 5005, Australia} 
\email[Varghese Mathai]{mathai.varghese@adelaide.edu.au}
\author{Jonathan Rosenberg}
\address{Department of Mathematics\\
University of Maryland\\
College Park, MD 20742-4015, USA} 
\email[Jonathan Rosenberg]{jmr@math.umd.edu}
\thanks{VM partially supported by
ARC grants DP130103924 and DP150100008.  
JR partially supported by NSF grants DMS-1206159 and DMS-1607162.  
JR also thanks VM and the University of Adelaide for their
hospitality during a research visit in January--March, 2016,
and thanks Jeff Adams for conversations about Section \ref{sec:orient} below.} 
\begin{document}
\begin{abstract}
This paper explores further the connection between Langlands
duality and T-duality for compact simple Lie groups, 
which appeared in work of Daenzer-Van Erp
and Bunke-Nikolaus.  We show that Langlands duality  gives
rise to isomorphisms of twisted $K$-groups, but that these
$K$-groups are trivial except in the simplest case of
$\SU(2)$ and $\SO(3)$.  Along the way we compute explicitly the
map on $H^3$ induced by a covering of compact simple Lie groups, 
which is either $1$ or $2$ depending in a complicated way
on the type of the groups involved.  We also give a new method
for computing twisted $K$-theory using the Segal spectral 
sequence, which is better adapted to many problems
involving compact groups. Finally we study a duality
for orientifolds based on complex Lie groups with an involution.
\end{abstract}
\keywords{T-duality, complex Lie group, compact Lie group, twisted K-theory, D-brane,
  Langlands duality, WZW model, level-rank duality}
\subjclass[2010]{Primary 19L50.  Secondary 81T30, 57T10.}

\maketitle

\section{Introduction}
\label{sec:intro}
This paper was motivated by a number of rather diverse
sources, especially the very intriguing
papers \cite{MR3285609,MR3361543}, which point out an interesting connection
between Langlands duality (which appears in representation theory
and number theory) and T-duality (a relationship between string theories
on different spacetime manifolds, especially when these are torus
bundles over a common base).  A second source comes from the extensive
literature on WZW (Wess-Zumino-Witten) models in physics, which
appear both as conformal field theories and as 
string theories (sigma models, to be precise) whose underlying spacetime manifold
is a Lie group, usually compact and simple.  In string theories
in general, D-brane charges are expected to take their values
in twisted $K$-theory of spacetime, so the study of WZW models
led to the study of twisted $K$-theory of compact Lie groups, first
by physicists (e.g., \cite{MR2079376,MR1877986,MR1834409,MR1960468,
MR2061550,MR2399311,MR2080884})
and later by mathematicians (Hopkins, unpublished, but quoted in
\cite{MR1877986}, and Douglas \cite{MR2263220}), although the computation 
of the twisted $K$-groups in the literature was mainly done for
compact {\em simple and simply connected} Lie groups.
The third motivation comes from the problem of trying to understand
exactly what one should expect from a ``duality theory'' for WZW models.
All three of these topics come together in the problems
discussed in this paper.

Section \ref{sec:Lang} deals with the connection between
{\em Langlands duality} and {\em T-duality}.  The papers
\cite{MR3285609,MR3361543}
show that on any compact simple Lie group $G$
not of type $B_n$ or $C_n$ ($n\ge 3$)\footnote{These
are the only cases where the flag manifolds for $G$ and
its Langlands dual are different.}, there is a class
$h\in H^3(G)$ for which the ``T-dual'' of the torus bundle
$G\to G/T$ is the Langlands dual $G\dual$ with a dual class
$h\dual\in H^3(G\dual)$. However, these papers give a description of
$h$ which is rather indirect, and 
and it is unclear how it relates to the generator of
$H^3(G)$.\footnote{The group $H^3(G)$ is infinite cyclic except for the
exceptional case of $\PSO(2n)$ for $n$ divisible by $2$; even in this case,
$H^3(G)$ has an infinite cyclic summand which is almost unique.}
An achievement of this paper is that in Sections
\ref{sec:Lang1} and \ref{sec:twistedKLang}, 
we are able to relate $h$ 
to an explicit multiple of the generator of $H^3(G)$, 
and also relate
$h\dual\in H^3(G\dual)$ to an explicit multiple of the generator
of $H^3(G\dual)$, thereby 
making an important clarification of the main result in the papers 
\cite{MR3285609,MR3361543}.
A key step towards this result is a novel explicit description 
of the pull-back map from $H^3(G)$ to $H^3(\tG)$, $\tG$ the
universal cover of $G$, in Section \ref{sec:fundclass},
Theorems \ref{thm:maponH3} and \ref{thm:PSO2n}.  The results
are surprisingly intricate, and must be done on a case-by-case basis.   

Having made the choice of $h$ and $h\dual$ more precise, we compute
the twisted $K$-theory $K^\bullet(G,h)$\,\footnote{This notation
  will always mean complex topological $K$-theory twisted by the
  class $h\in H^3$.} 
explicitly in our next central result, Theorem~\ref{thm:Tdualclasses},
whose proof is rather involved. The surprising upshot is
that the T-dualities coming from Langlands duality of simply connected
simple Lie groups \emph{never give any interesting isomorphisms of
  twisted $K$-groups}, except for the classic case of $\SU(2)$ and $\SO(3)$.
Section \ref{sec:twistedK}
then revisits the topic of computing twisted $K$-theory
$K^\bullet(G,h)$, for arbitrary choices of $h$.
This has been the subject of an extensive
literature, most notably \cite{MR1877986,MR2079376,
  MR2061550,MR2263220}, and the results are rather complicated
and hard to understand.  However, this is an important problem
because of the connection, discovered by physicists, between
these twisted $K$-groups and fusion rings and representations
of loop groups.  We therefore present in Section \ref{sec:twistedK}
an easier way of computing these twisted $K$-groups in some cases.
Theorems \ref{thm:twistedKofSU}, \ref{thm:twistedKofSp}, and
\ref{thm:twistedKofG2} recover some of the results of
\cite{MR2061550,MR2263220} via much more elementary methods, and
Theorems \ref{thm:twistedKPSU3} and \ref{thm:nonsc}
illustrate how the same methods
can be used in the case of non-simply connected Lie groups,
where the literature is still very incomplete.

Section \ref{sec:levelrank}
applies this to {\em level-rank duality}.
Finally, Section \ref{sec:orient} and
Theorem~\ref{thm:invduality} deal with a duality theory
for orientifolds based on complex Lie groups with a group
involution (either holomorphic or anti-holomorphic). It gives an isomorphism between the 
Real K-theories of a complex Lie group $G$ with respect to two different Lie-theoretic involutions.

\section{Langlands duality and the twisted $K$-theory of
  simple compact Lie groups}
\label{sec:Lang}

\subsection{The canonical class in $H^3$}
\label{sec:fundclass}

In this subsection, we consider what at first sight appears to
be a rather trivial and also somewhat esoteric question.
Suppose $G$ is a connected simple compact Lie group, 
not isomorphic to $\PSO(2n)$ with $n$ even, with
universal cover $\tG$. Since it is a classical fact that $\tG$
is $2$-connected, with $\pi_3(G)=\pi_3(\tG)\cong \bZ$, it follows
that $H^3(G)$ and $H^3(\tG)$ are both infinite cyclic, i.e., isomorphic
to $\bZ$.\footnote{Throughout the paper, homology and cohomology
  groups will be taken with integer coefficients unless
  specified otherwise.  The statement about $H^3(G)$ follows from
  considering the Serre spectral sequence for the fibration
  $\tG \to G\to K(\pi_1(G), 1)$ and using the fact that $\pi_1(G)$ is cyclic,
  so that we know its integral cohomology.  There is one exceptional case:
  if $G=\PSO(2n)$ with $n$ even, then $\pi_1(G)\cong (\bZ/2)\times(\bZ/2)$
  and the same spectral sequence shows that $H^3(G)\cong \bZ\oplus (\bZ/2)$.
  The torsion generator of $H^3(G)$ is killed on passage to the double cover
  $\SO(2n)$.}
In fact, $H^3(G)$ and $H^3(\tG)$ have canonical orientations (choices
of generator), since for a connected simple compact Lie group,
$H^3_{\text{deR}}(G)\cong H^3(G)\otimes \bR$ can be identified
with $H^3(\fg)$, the Lie algebra cohomology in degree $3$, which is
generated by the $3$-form $(X,Y,Z)\mapsto \langle X, [Y,Z]\rangle$, where
$\langle \underline{\phantom{X}}, \underline{\phantom{X}}\rangle$ is the
negative of the Killing form.
Let $\pi\co\tG\to G$ be the covering map; its degree $d$
divides the determinant of the Cartan matrix of $G$, and is equal to
it in case $G=\tG/Z(\tG)$ is the adjoint group for its Lie algebra.
Thus if $G$ is of adjoint type, $d$ is $n+1$ for type $A_n$ ($\tG=\SU(n+1)$),
$2$ for type $C_n$ ($\tG=\Sp(n)$), $2$ for type $B_n$ ($G=\SO(2n+1)$),
$4$ for type $D_n$ ($G=\PSO(2n)$, $n\ge 3$), $1$ for types $G_2$, $F_4$,
and $E_8$, $3$ for type $E_6$ and $2$ for type $E_7$.

The fact that $H^3(G)$ and $H^3(\tG)$ are both infinite cyclic
means that $\pi^*\co\! H^3(G)\to H^3(\tG)$ sends a generator of
$H^3(G)$ to some positive multiple of a generator of $H^3(\tG)$.
What multiple is this?  The following theorem largely answers this question.
Note incidentally that this question was taken up in
\cite[Appendix 1]{MR946997}, where it was pointed out that
there is a natural exact sequence
\[
0\to H_3(\tG) \xrightarrow{\pi_*} H_3(G) \to \pi_1(G) \to 0.
\]
However, this doesn't completely settle things since there is
an extension problem --- it is not clear when this exact sequence
splits, or when it ``partially splits.''

\begin{theorem}
\label{thm:maponH3}
Let $G$ be a connected simple compact Lie group of rank $n$, with
universal cover $\tG$, with $G$ not isomorphic to $\PSO(2n)$ 
with $n$ even, and let $d$ be the degree of the covering
$\pi\co\tG\to G$.  Think of $\pi^*\co H^3(G)\to H^3(\tG)$ as a map
$\bZ\to \bZ$.  
\begin{enumerate}
\item If $G$ is of type $A_n$ and if $n+1=2q$ with $q$ odd, then $\pi^*$ is
multiplication by $2$ if $d$ is even and the identity if $d$ is odd.
If $n+1$ is either odd or divisible by $4$,
then $\pi^*$ is the identity.
\item If $G$ is of type $C_n$ and $\pi$ is not
an isomorphism, i.e., $\tG=\Sp(n)$ and $G=\PSp(n)$, then $\pi^*$
is the identity if $n$ is even and is multiplication by $2$ if $n$ is odd.
\item If $G=\SO(2n)$ and $\tG=\Spin(2n)$ with $n\ge 3$, 
or if $G=\SO(2n+1)$ and
$\tG=\Spin(2n+1)$ with $n\ge 2$, or if $G=\PSO(2n)$ and
$\tG=\Spin(2n)$ with $n\ge 3$ odd
{\lp}so $G$ is of type $D_n$ or $B_n$,
though we are excluding one other possibility for $G$ in type
$D_n$ when $n$ is even{\rp}, then $\pi^*$ is the identity.
\item If $G$ is the adjoint group of $E_6$, then $\pi^*$ is
the identity, but if $G$ is the adjoint group of $E_7$, then $\pi^*$ is
multiplication by $d=2$.
\end{enumerate}
\end{theorem}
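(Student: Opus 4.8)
The plan is to realise $\pi^\ast$ as an edge homomorphism in a Serre spectral sequence and then reduce everything to restricting one cohomology class to the centre. Since $G$ is adjoint, $\pi_1(G)=Z(\tG)$ is cyclic of order $d$, and $\tG\to G$ underlies the (Borel) fibration $\tG\to G\to B\pi_1(G)$. As $\tG$ is $2$-connected, $H^1(\tG)=H^2(\tG)=0$, and as $\pi_1(G)$ is cyclic, $H^3\bigl(B\pi_1(G);\bZ\bigr)=0$; feeding these into the cohomology Serre spectral sequence, every filtration quotient of $H^3(G)$ except $E_\infty^{0,3}$ vanishes, and $E_\infty^{0,3}=\ker\bigl(d_4\co H^3(\tG)\to H^4(B\pi_1(G);\bZ)\cong\bZ/d\bigr)$, with the edge map identifying $\pi^\ast$ with the inclusion of this kernel into $H^3(\tG)\cong\bZ$. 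Orienting both $H^3$'s by the Killing $3$-form forces $\pi^\ast$ to be a positive multiple of the generator, namely multiplication by the order of $d_4(\omega)$ in $\bZ/d$ ($\omega$ the canonical generator), so it remains to evaluate $d_4(\omega)$.

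Next I would compute $d_4(\omega)$ by naturality of transgression. There is a morphism of fibrations from $\tG\to G\to B\pi_1(G)$ to the universal fibration $\tG\to E\tG\to B\tG$, lying over the map $B\iota\co B\pi_1(G)\to B\tG$ induced by the central inclusion $\iota\co Z(\tG)\hookrightarrow\tG$ and restricting to the identity on the fibre $\tG$ (the Borel fibration being associated, via the $\tG$-action, to the pullback of the universal principal $\tG$-bundle). Since $\omega$ transgresses in the universal fibration to a generator $\gamma$ of $H^4(B\tG;\bZ)\cong\bZ$, naturality gives $d_4(\omega)=(B\iota)^\ast\gamma$. Factoring $\iota$ through a maximal torus $T$ and using $H^4(BT;\bZ)=\operatorname{Sym}^2 X^\ast(T)$, the restriction $H^4(BT;\bZ)\to H^4(B\pi_1(G);\bZ)$ carries a quadratic form to its value on a lift to the coweight lattice of a generator of $Z(\tG)=P^\vee/Q^\vee$: if $\mu$ is the minuscule fundamental coweight generating $Z(\tG)$, then $(B\iota)^\ast\gamma=\gamma(d\mu)\bmod d$, with $\gamma$ now read as the integral quadratic form on the coroot lattice it induces.

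The remaining step, and the only delicate one, is to normalise $\gamma$ and evaluate. For the simply laced groups $E_6,E_7$ the generator of $H^4(B\tG;\bZ)$ is $\tfrac12\langle\,\cdot\,,\,\cdot\,\rangle$, where $\langle\,,\,\rangle$ gives (co)roots squared length $2$: this is $\bZ$-valued on the even coroot lattice and equals $\pm1$ on each coroot, as one checks on a root $\SU(2)$, and it is this half--basic form --- not $\langle\,,\,\rangle$ itself --- that must enter. Then $(B\iota)^\ast\gamma=\pm\tfrac{d^2}{2}\langle\mu,\mu\rangle\bmod d$, and the answer is read off from the discriminant forms of the two root lattices: the generator $\mu$ of $P^\vee/Q^\vee$ has $\langle\mu,\mu\rangle=\tfrac43$ for $E_6$ and $\tfrac32$ for $E_7$. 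Hence in type $E_6$, $d=3$ and $\pm\tfrac92\cdot\tfrac43=\pm6\equiv0\pmod 3$, so $d_4(\omega)=0$ and $\pi^\ast$ is the identity; in type $E_7$, $d=2$ and $\pm\tfrac42\cdot\tfrac32=\pm3\equiv1\pmod 2$, so $d_4(\omega)$ is the nonzero element of $\bZ/2$ and $\pi^\ast$ is multiplication by $2$. I expect the main obstacle to be exactly this normalisation: one cannot shortcut it by restricting $c_2$ of an honest representation $V$, since $c_2(V)=\operatorname{ind}(V)\,\gamma$ and for $E_6$ and $E_7$ the Dynkin index of every representation is divisible by $d$ (already the smallest nontrivial ones, the minuscule $27$ and $56$), so $(B\iota)^\ast c_2(V)$ vanishes in $\bZ/d$ and carries no information; one is forced to handle the fractional class $\tfrac12\langle\,,\,\rangle$ directly and to be sure no stray factor of $2$ creeps into the torus restriction.
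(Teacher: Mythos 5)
Your reduction of $\pi^\ast$ to the transgression in the fibration $\tG\to G\to B\pi_1(G)$ is sound, and it is a genuinely different route from the paper's proof, which proceeds case by case using Borel's and Baum--Browder's mod $p$ cohomology of the quotient groups, ladders of fibrations over spheres and lens spaces, and low Postnikov approximations. (Drop the phrase ``since $G$ is adjoint'': the theorem also covers intermediate quotients such as $\SO(2n)$ and $\SU(n+1)/\mu_d$, but your argument only uses that $\pi_1(G)$ is cyclic, so nothing is lost.) The identification $d_4(\omega)=(B\iota)^\ast\gamma$ by naturality, and the evaluation $(B\iota)^\ast\gamma=q(d\mu)\bmod d$ with $q$ the basic Weyl-invariant quadratic form on the coweight lattice, are correct, and your normalisation and arithmetic for $E_6$ and $E_7$ give the stated answers. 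However, as written you only prove part (4): for parts (1)--(3) you set up the invariant but never evaluate it, so the proposal is incomplete. (For types $A$ and $C$ your normalisation worry disappears, since the defining representation has Dynkin index $1$ and one may simply restrict its $c_2$, getting $\binom{n+1}{2}\bmod d$ and $\binom{2n}{2}\equiv n\bmod 2$ respectively; for $B$ and $D$ one evaluates $q$ on the vector and spin coweights.)

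The more serious point is what happens when the evaluation is actually finished. Your method reproduces parts (2) and (3) for $\SO(m)$, and the cases $n+1$ odd or $n+1\equiv 2\pmod 4$ of part (1), but it contradicts the statement in the remaining cases: for $G=\PSU(n+1)$ with $4\mid n+1$ one finds $\binom{n+1}{2}\equiv\tfrac{n+1}{2}\pmod{n+1}$, an element of order $2$, so $\pi^\ast$ is multiplication by $2$ rather than the identity (already for $\PSU(4)=\PSO(6)$, $\binom{4}{2}=6\equiv 2\pmod 4$); likewise for $\PSO(2n)$ with $n$ odd, $q(4\omega_n^\vee)=2n\equiv 2\pmod 4$. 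I believe your computation, not the stated theorem, is correct here. One can check it without spectral sequences: from the torus fibrations over the flag manifold one gets $H^3(\SU(m))\cong\bZ\,e_2\subset S^2(P)$ and $H^3(\PSU(m))\cong(P\cdot Q)\cap\bZ\,e_2$, and $e_2\mapsto\binom{m}{2}$ under $S^2(P)\to S^2(P/Q)\cong\bZ/m$, so the index is the order of $\binom{m}{2}$ in $\bZ/m$, namely $2$ for every even $m$; this also matches the classical descent condition $k\binom{N}{2}\equiv 0\bmod N$ for the level $k$ WZW class on $\SU(N)/\bZ_N$. The paper's argument for $4\mid n+1$ breaks at the assertion that $\beta(x_1y)=y^2$ with $\beta=\Sq^1$: when $4\mid n+1$ one has $\Sq^1x_1=x_1^2=0$ (the Bockstein of Baum--Browder carrying $x_1$ to $y$ is the higher Bockstein $\beta_r$, not $\Sq^1$), so $\Sq^1(x_1y)=0$ and $x_1y$ can be --- and in fact is --- the mod $2$ reduction of the integral generator, which therefore lies in $\ker f^\ast$. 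So your route, once completed, is not only different but appears to expose an error in case (1) for $4\mid n+1$ and in case (3) for $\PSO(2n)$ with $n$ odd.
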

\begin{proof}
First recall that (by \cite[Ch. 4]{MR505692}, for example) there is a
transfer map $\pi_*\co H^3(\tG)\to H^3(G)$ such that $\pi_*\circ
\pi^*$ is multiplication by $d$.  So if $d$ is prime, that means there
are only two possibilities: either $\pi^*$ is multiplication by $d$
or else it is the identity.

(1) One case is obvious --- if $G$ has
rank $1$, so that $\tG\cong \SU(2)\cong \Sp(1)\cong \Spin(3)$,
then the only nontrivial covering map is $\pi\co \SU(2)\to \SO(3)$,
and this can be identified with the $2$-to-$1$ covering map $S^3\to
\bR\bP^3$. But a covering map of connected compact oriented $k$-manifolds
induces multiplication by the degree of the covering
on $H^k$, so in this case $\pi^*$ sends the generator of $H^3(\bR\bP^3)$ to
\emph{twice} a generator of $H^3(S^3)$.

For the remaining cases, observe that the $(n+1)$-fold covering map $\SU(n+1)
\to \PSU(n+1)$ factors through $G$.  So when $\pi^*=1$ for the adjoint
group $\PSU(n+1)$, then the same will be true for $G$ as well.
Next suppose that $n+1=2q$ with $q$ odd and we know that $\pi^*=2$
when $G=\PSU(n+1)$.  Of the two coverings $\SU(n+1)\to G$
and $G\to \PSU(n+1)$, exactly one is of even degree, and the $\pi^*$ maps for
these two coverings must multiply together to give $2$.  Since $\pi^*$
for any covering
divides the order of the covering, the only possibility is that $\pi^*$
is $2$ for the even covering and $1$ for the odd covering.  So the result
is reduced to the case $G=\PSU(n+1)$.

Now we apply a result of Baum and Browder \cite[Corollary 4.4]{MR0189063}:
if $f\co \U(n+1)\to \PSU(n+1)$ is the quotient map (recall that
$\PSU(n+1) = \U(n+1)/Z(\U(n+1))$), then $f^*$, the induced map
on cohomology with $\bF_p$ coefficients, for any prime $p$ dividing $n+1$,
has kernel equal to the ideal generated by the generator $y$ of
$H^2(\PSU(n+1),\bF_p)$.  Also note that the diagram
\[
\xymatrix{
\SU(n+1) \ar[r]^\iota \ar[rd]^\pi & \U(n+1)\ar[d]^f\\ & \PSU(n+1)}
\]
commutes, and since $\iota^*\co H^3(\U(n+1))\to H^3(\SU(n+1))$
is an isomorphism, this identifies the kernel of
$\pi^*\co H^3(\PSU(n+1),\bF_p)\to H^3(\SU(n+1),\bF_p)$
as well.

We recall from \cite[Th\'eor\`eme 11.4]{MR0064056} and \cite[Corollary 4.4]{MR0189063}
that if $n+1=p^r m$, with $p$ prime and $\gcd(m, p)=1$, then
\begin{equation}
  H^\bullet(\PSU(n+1),\bF_p)\cong \bF_p[y]/(y^{p^r})\otimes \textstyle{\bigwedge}
  (x_1, x_3, \cdots, \widehat{x_{2p^r-1}},\cdots, x_{2n+1}),
\label{eq:cohomPSU}
\end{equation}
with $\beta x_1 = y$ ($\beta$ the Bockstein) and with the $x_j$'s 
except for $x_1$ all reductions of integral classes.  If $p=2$ and $r=1$, one has
the additional relation $x_1^2=y$.  So, in all cases,
\[
H^3(\PSU(n+1),\bF_p) = \bF_p\, x_1y + \bF_p\, x_3\text{ (if present)}.
\]
We have $\beta(x_3)=0$ and $\beta(x_1y)=y^2$.  

There are a few cases to consider. 
If $p$ is odd, then $p^r \ge 3$ and $2p^r - 1 > 3$.
So $x_3$ is not omitted,
$y^2 \ne 0$, and $x_1y$ is not the reduction of an integral class.
Hence no integral class in $H^3(\PSU(n+1))$ reduces to something
nonzero
in the kernel of $f^*$ on mod $p$ cohomology.  Hence $\pi^*\co H^3(\PSU(n+1))
\to H^3(\SU(n+1))$ is never divisible by an odd prime.

Next, suppose $n+1=2q$ with $q$ odd.  If we choose $p=2$, then $r=1$,
so $x_3$ is missing in \eqref{eq:cohomPSU}, and $y^2=0$.  So
$x_1y$ is the reduction of the generator of $H^3(\PSU(n+1))$ in this
case, and it lies in the kernel of $f^*$.  Hence $\pi^*$ is divisible by $2$.
Since we have already shown it is not divisible by any odd prime, it is
exactly $2$.

Finally, consider the case $p=2$ when $n+1$ is divisible by $4$. In
this case, $p^r\ge 4$ so, again, $2p^r - 1 > 3$.  So $x_3$ is present
in \eqref{eq:cohomPSU}, and $y^2\ne 0$, so that $x_1y$ is 
not the reduction of an integral class.  Hence the kernel of $f^*$
does not contain the nonzero reduction of any integral cohomology
class, and $\pi^*$ cannot be divisible by $2$.  Since it is also not
divisible by any odd prime, $p^*=1$ in this case.

Note that since $A_3=D_3$, $\PSU(4) = \PSO(6)$. So
$\pi^*=1$ in the case $\tG=\Spin(6)$, $G=\PSO(6)$. 

(2) The case $G=\PSp(n)$, $\tG=\Sp(n)$, $n\ge2$, is
handled using \cite[Th\'eor\`eme 11.3]{MR0064056},
which gives
\begin{equation}
  H^\bullet(\PSp(n),\bF_2)\cong \bF_2[a]/(a^{4s})\otimes \textstyle{\bigwedge}
  (x_3, \cdots, \widehat{x_{4s-1}},\cdots, x_{4n-1}),
\label{eq:cohomPSp}
\end{equation}
with $a$ of degree $1$ and $s$ the largest power of $2$ dividing $n$.
Note that $4s=4$ if $n$ is odd and $4s \ge 8$ if $n$ is even.
The Bockstein $\beta=\Sq^1$ sends $a$ to $a^2$, and
all torsion is of order $2$, so the integral
cohomology is
\begin{equation}
  H^\bullet(\PSp(n),\bZ)\cong\bZ[a^2]/(2a^2,a^{4s})\otimes 
  \textstyle{\bigwedge}
  (x_3, \cdots, x_{4n-1}),
\label{eq:cohomPSpZ}
\end{equation}
with $x_{4s-1}$ reducing mod $2$ to $a^{4s-1}$.
Consider the commuting diagram of fibrations
\begin{equation}
\xymatrix{\Sp(n-1)\ar[r] \ar@{=}@<1ex>[d]& \Sp(n) \ar[r] \ar@{->>}[d]^\pi
  & S^{4n-1}\ar@{->>}[d]\\
  \Sp(n-1)\ar[r] & \PSp(n)  \ar[r]&  \,\bR\bP^{4n-1}.}
\label{eq:fibs1}
\end{equation}
From \eqref{eq:cohomPSpZ}, $H^4(\PSp(n))$ vanishes if and only if $s=1$,
i.e., if and only if $n$ is odd.  However, in the spectral sequence
for the bottom row of \eqref{eq:fibs1}, $E_2^{4,0}=\bF_2\,a^4$.
So $d_4(x_3)$ will be $0$ if $n$ is even and must equal $a^4$ if
$n$ is odd.  Comparing with the spectral sequence for the top row of
\eqref{eq:fibs1}, we see that we get a diagram
\[
\xymatrix{
  H^3(\Sp(n-1)) \ar@{=}[r] \ar@{=}[d]
  & E_\infty^{0,3} & \ar[l]^(.6)\cong H^3(\Sp(n))\\
  H^3(\Sp(n-1)) \ar@{=}[r] & E_\infty^{0,3} & \ar[l]^(.6)\cong H^3(\PSp(n))
\ar[u]^{\pi^*}
}
\]
for $n$ even, whereas for $n$ odd, the image under $\pi^*$ of $H^3(\PSp(n))$
in $H^3(\Sp(n-1))\cong H^3(\Sp(n))$ is of index two.  This is consistent,
incidentally, with what happened for $\Sp(1)=\SU(2)$ and for
$\PSp(1)=\SO(3)$.

(3) Now consider the orthogonal and spinor groups.
Here we proceed by induction on the dimension.
Recall that $\Spin(4)$ is not simple and
splits as a product of two copies of $\SU(2)$, while $\Spin(3)=\SU(2)$
was already dealt with above, so we start the induction with $\Spin(5)$.
This is the same as $\Sp(2)$ and $\SO(5)$ is the same as $\PSp(2)$,
so we've already handled this case.  We could get the case of $\Spin(6)$
by the induction below, but instructive to give a separate argument
just as a check.  Indeed, $\Spin(6)=\SU(4)$,
and its cohomology is torsion-free, though $\SO(6)$ is a double cover of
$\PSU(4)$. But as in \eqref{eq:fibs1}, we have a commutative diagram of
fibrations
\[
\xymatrix{\SU(3)\ar[r] \ar@{=}@<1ex>[d]& \SU(4) \ar[r] \ar@{->>}[d]^\pi
  & S^{7}\ar@{->>}[d]\\
  \SU(3)\ar[r] & \SO(6) \ar[r]&  \,\bR\bP^7.}
\]
(The action of $SO(6)$ on $\bR\bP^7$ is by dividing out the
action of $\SU(4)$ on $S^7$ by $\pm 1$.)  As before, we get a commuting
diagram of Serre spectral sequences.  Along the vertical axis we have
the cohomology ring of $\SU(3)$, which is $\bigwedge(x_3,x_5)$.
Along the horizontal axis we have $\bigwedge(x_7)$ in the case
of $\SU(4)$ and $H^\bullet(\bR\bP^7)$, generated by a torsion-free
class $x_7$ with $x_7^2=0$ and a $2$-torsion class $y_2$
with $y_2^4=0$, in the case of $\SO(6)$.
The spectral sequence for $\SU(4)$ collapses,
while the one for $\SO(6)$ has a potential differential $d_4$ sending
the generator $x_3$ in $E_2^{0,3}$ to $y_2^2$.  However, this
would kill off the torsion in $H^4(\SO(6))$, and Borel proves
in \cite[Th\'eor\`eme 8.6]{MR0064056} that $H^\bullet(\SO(n))$
has $2$-torsion in degree $4$ as long as $n\ge 5$.  So the differential
$d_4$ must vanish on $x_3$ and we conclude that $\pi^*$ is an isomorphism
via a diagram chase as before.

Now we're ready for the inductive step.  Assume $n\ge 6$ and
we already know
\[\pi^*\co H^3(\SO(n-1))\to H^3(\Spin(n-1))
\]
is an isomorphism, and consider
the commutative diagram of fibrations
\[
\xymatrix{\Spin(n-1)\ar[r] \ar@{->>}@<1ex>[d]^\pi& \Spin(n) \ar[r] \ar@{->>}[d]^\pi
  & S^{n-1}\ar@{=}[d]\\
  \SO(n-1)\ar[r] & \SO(n) \ar[r]&  \,S^{n-1}.}
\]
Examination of the Serre spectral
sequences shows that $H^3=E_\infty^{0,3}$, for both $\SO(n)$ and
$\Spin(n)$.  By inductive hypothesis, $\pi^*$ is an isomorphism on
$E_2^{0,3}$, and there is no room for differentials affecting the
vertical axis until we get to $H^4$ (since the non-zero columns
are distance $n-1\ge 5$ apart.)  So $E_2^{0,3}=E_\infty^{0,3}$
in both spectral sequences and we're done.

To get the case of $G=\PSO(2n)$ with $n$ odd, recall that we already
did this case when $n=3$, since $\PSO(6)=\PSU(4)$. However,
by \cite[Corollary 1.8]{MR0372899}, the inclusion $\PSO(6)\to
\PSO(2n)$ induces an isomorphism on cohomology in degree $\le 3$
for any $n$ odd.  Similarly for the inclusion  $\Spin(6)\to \Spin(2n)$,
by what we've already done.  So the case of $\PSO(2n)$ follows
from the known case of $\PSO(6)$.

(4) To handle the case where $G$ is the adjoint group of $E_6$, since
$d=3$ in this case, we need to study the cohomology of $G$ and $\tG$ with
$\bF_3$ coefficients.  Fortunately this is known. By
\cite{MR591603} or by \cite{Fung},
\begin{equation}
H^\bullet(\tG,\bF_3) = \bF_3[e_8]/(e_8^3)\otimes \bigwedge
(e_3,e_7,e_9,e_{11},e_{15},e_{17}),
\label{eq:E6}
\end{equation}
where $e_7=\cP^1e_3$, $e_8=\beta e_7$, and $e_{15}=\cP^1 e_{11}$,
whereas by \cite{MR0474355},
\begin{equation}
H^\bullet(G,\bF_3) = \bF_3[e_2,e_8]/(e_2^9,e_8^3)\otimes \bigwedge
(e_1,e_3,e_7,e_9,e_{11},e_{15}).
\label{eq:AdE6}
\end{equation}
(One can also read off \eqref{eq:E6} and 
\eqref{eq:AdE6} from \cite{MR1989490}.)
To understand what's going on here, the comments in \cite{Fung} are
useful.  To a good degree of approximation, $\tG$ looks like $K(\bZ,3)$,
whose cohomology is known and agrees with \eqref{eq:E6} in low degree.
Now $G$, even though it is not simply connected, is a simple space
and so has a Postnikov tower, the bottom of which looks like a fibration
\begin{equation}
\xymatrix{K(\bZ, 3) \ar[r] & X\ar[d]\\ & \,K(\bZ/3, 1).}
\label{eq:PostAdE6}
\end{equation}
The $k$-invariant of \eqref{eq:PostAdE6}  is identified with
the differential $d_4$ of the associated Serre spectral sequence,
taking the canonical class in $H^3(K(\bZ, 3))$ to
the $k$-invariant living in $H^4(K(\bZ/3, 1)) = \bF_3 e_2^2$,
with notation consistent with \eqref{eq:AdE6}.  Since $e_2^2\ne 0$ in
$H^4(G,\bF_3)$, the $k$-invariant must vanish, meaning that
$X\simeq K(\bZ, 3) \times K(\bZ/3, 1)$.  Indeed, the low-dimensional
cohomology of $G$ indeed looks like that of $K(\bZ, 3) \times K(\bZ/3, 1)$.
So to compute $\pi^*$, we can replace  $\tG$ by $K(\bZ, 3)\times E(\bZ/3)$
and $G$ by $K(\bZ, 3)\times B(\bZ/3)$ without changing the low-degree
cohomology.  Since $H^3$ lives on the $K(\bZ, 3)$ factor, we see
that $\pi^*$ is the identity on cohomology with coefficients in $\bF_3$
and thus on integral cohomology as well.

The case of the adjoint group $G$ of $E_7$ is handled similarly,
even though the eventual result is different.  This
time $\pi$ is a covering of degree $2$ and so we have to look at
cohomology with coefficients in $\bF_2$.  For $\tG$ this turns
out to be (\cite[Proposition 5.1]{MR0415651} or
\cite[Proposition 2.30]{Fung}):
\begin{equation}
H^\bullet(\tG,\bF_2) = \bF_2[x_3,x_5,x_9]/(x_3^4,x_5^4,x_9^4)\otimes \bigwedge
(x_{15},x_{17},x_{23},x_{27}).
\label{eq:E7}
\end{equation}
And for the adjoint group $G$ we get (\cite[Theorem 5.3]{MR0415651}
or \cite{MR509497}):
\begin{equation}
  H^\bullet(G,\bF_2) = \bF_2[x_1, x_5, x_9] / (x_1^4 ,x_5^4,x_9^4)
  \otimes \bigwedge (x_6,x_{15},x_{17},x_{23},x_{27}).
\label{eq:AdE7}
\end{equation}
We again have an approximation of $G$ by a two-stage Postnikov system
\begin{equation}
\xymatrix{K(\bZ, 3) \ar[r] & X\ar[d]\\ & \,K(\bZ/2, 1).}
\label{eq:PostAdE7}
\end{equation}
The $k$-invariant of \eqref{eq:PostAdE7}  is identified with
the differential $d_4$ of the associated Serre spectral sequence,
taking the canonical class $x_3$ in $H^3(K(\bZ, 3))$ to
the $k$-invariant living in $H^4(K(\bZ/2, 1)) = \bF_2 x_1^4$,
with notation consistent with \eqref{eq:AdE7}. But since
$x_1^4 =0$ in $H^\bullet(G,\bF_2)$, this time the
$k$-invariant is non-zero, meaning that $d_4(x_3) = x_1^4$.
The edge homomorphism $H^3(X,\bF_2)\to H^3(K(\bZ,3),\bF_2)$
is therefore $0$. But this map must also coincide with the map
$\pi^*\co H^3(G,\bF_2)\to H^3(\tG,\bF_2)$
(since $K(\bZ,3)$ is a good approximation to $\tG$
and $X$ is a good approximation to $G$), so we conclude that
$\pi^*$ is \emph{not} the identity on $H^3$ with coefficients in $\bF_2$,
hence is not an isomorphism on integral cohomology.
\end{proof}
\begin{remark}
\label{rem:SUn}
For $G$ of types $G_2$, $F_4$, or $E_8$,  we automatically have
$d=1$ and there is nothing to prove. So more
work remains only in certain cases in type $D_n$ ($n$ even).
Just to fill in a little more about the case of $D_n$ ($n$ even),
we have the following. 
\end{remark}  
\begin{theorem}
\label{thm:PSO2n}
Let $G=\PSO(2n)$ with $n$ even, $n\ge 4$,
and let $\pi\co \tG=\Spin(2n)\to G$ be its
universal covering map.  Recall that in this case that $H^3(G)\cong \bZ \oplus
(\bZ/2)$.  Then $\pi^*\co H^3(G)\to H^3(\tG)$ kills the torsion and on
the torsion-free part is an isomorphism for $n$ divisible by $4$,
multiplication by $2$ for $n\equiv 2 \pmod 4$.
\end{theorem}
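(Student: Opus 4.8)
The plan is to compute, in place of $\pi^*$ directly, the single $k$-invariant that governs the $3$-type of $G=\PSO(2n)$, much as in the proof of Theorem~\ref{thm:maponH3}(4). Since $H^3(\tG)=H^3(\Spin(2n))\cong\bZ$ is torsion-free, $\pi^*$ automatically annihilates the $\bZ/2$ summand of $H^3(G)$, so only the torsion-free part is at issue. Because $G$ is an $H$-space, $\pi_1(G)=(\bZ/2)^2$ acts trivially on $\pi_2(G)=0$ and $\pi_3(G)=\bZ$, so the Postnikov stage $P_3(G)$ fits in a fibration $K(\bZ,3)\to P_3(G)\to B(\bZ/2)^2$ with $k$-invariant $\kappa\in H^4(B(\bZ/2)^2;\bZ)\cong(\bZ/2)^3$. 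The universal cover $\tG$ has $P_3(\tG)=K(\bZ,3)$, and $P_3(\tG)\to P_3(G)$ is the fibre inclusion, so in the Serre spectral sequence $d_4\colon E_4^{0,3}=H^3(K(\bZ,3))=\bZ\to E_4^{4,0}=H^4(B(\bZ/2)^2;\bZ)$ carries the fundamental class to $\kappa$; hence on torsion-free parts $\pi^*$ is an isomorphism if $\kappa=0$ and multiplication by $2$ if $\kappa\ne0$. So I must show $\kappa=0$ exactly when $4\mid n$.

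Next I would determine $\kappa$ from its restrictions to the three connected double covers of $G$: the group $\SO(2n)=\Spin(2n)/\langle z\rangle$ and the two half-spin groups $\Spin(2n)/\langle\omega\rangle$, $\Spin(2n)/\langle z\omega\rangle$, where $Z(\Spin(2n))=\{1,z,\omega,z\omega\}$. A cover corresponding to $H\subseteq\pi_1(G)$ has its $3$-type pulled back along $BH\to B(\bZ/2)^2$, so its own $k$-invariant is the image of $\kappa$ under restriction, and a direct check shows the combined restriction map $H^4(B(\bZ/2)^2;\bZ)\to\bigoplus_H H^4(BH;\bZ)$ is an isomorphism of groups isomorphic to $(\bZ/2)^3$. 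Thus $\kappa$ is determined by these three classes. The one for $\SO(2n)$ vanishes, because Theorem~\ref{thm:maponH3}(3) says $H^3(\SO(2n))\to H^3(\Spin(2n))$ is an isomorphism and hence, by the first paragraph, the corresponding $k$-invariant is $0$. So everything comes down to the restriction of $\kappa$ to a half-spin group.

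Now take $\Spin(2n)/\langle\omega\rangle$ (the case of $\langle z\omega\rangle$ is identical, as $z\omega$ too maps to $-I$). The element $\omega$ maps to $-I\in\SO(2n)$, so $\langle\omega\rangle$ acts on $S^{2n-1}=\Spin(2n)/\Spin(2n-1)$ by the antipodal map; since $\omega\notin\Spin(2n-1)$, passing to the quotient yields a fibre bundle $\Spin(2n-1)\to \Spin(2n)/\langle\omega\rangle\to\bR\bP^{2n-1}$, in fact a principal $\Spin(2n-1)$-bundle, since the right $\Spin(2n-1)$-action descends. It is obtained from the standard principal $\Spin(2n-1)$-bundle $\Spin(2n)\to S^{2n-1}$, which represents $TS^{2n-1}$ with its spin structure, by dividing out a free $\bZ/2$-action (antipode on the base, $\omega$ on the total space); hence its associated rank-$(2n-1)$ vector bundle is $T\bR\bP^{2n-1}$, with the spin structure that exists because $n$ is even (so $w_2(T\bR\bP^{2n-1})=\binom{2n}{2}t^2=0$). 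Comparing the Serre spectral sequence of this bundle with the Postnikov fibration $K(\bZ,3)\to P_3(\Spin(2n)/\langle\omega\rangle)\to B\bZ/2$ — using that $\bR\bP^{2n-1}\to B\bZ/2$ is a mod-$2$ cohomology isomorphism through degree $4$ and that $H^1(\Spin(2n-1);\bF_2)=H^2(\Spin(2n-1);\bF_2)=0$, so the degree-$3$ fibre class is transgressive — identifies the restricted $k$-invariant with that transgression, which by naturality of the transgression and the classification of the bundle equals $w_4(T\bR\bP^{2n-1})=\binom{2n}{4}t^4$, using $w(T\bR\bP^m)=(1+t)^{m+1}$. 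By Lucas' theorem $\binom{2n}{4}$ is odd exactly when the $2$'s digit of $n$ is nonzero, i.e. (for $n$ even) exactly when $n\equiv2\pmod4$; so $\kappa=0$ iff $4\mid n$, which is the assertion.

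The hardest part will be the third paragraph: identifying the fibre bundle $\Spin(2n-1)\to\Spin(2n)/\langle\omega\rangle\to\bR\bP^{2n-1}$ as the principal bundle underlying $T\bR\bP^{2n-1}$, with correct bookkeeping of the antipodal twisting and the spin structure, and making rigorous that the restricted Postnikov $k$-invariant is exactly the transgression of the degree-$3$ fibre class in that bundle (equivalently, that it equals $c^*w_4$ for the classifying map $c\colon\bR\bP^{2n-1}\to B\Spin(2n-1)$). Once those identifications are in place, the remaining computation is the elementary binomial-coefficient analysis above.
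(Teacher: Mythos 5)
Your argument is correct, and it takes a genuinely different route from the paper's. The paper works directly with the fibration $\SO(2n)\to\PSO(2n)\to\bR\bP^\infty$ and reads the answer off Borel's explicit presentation of $H^\bullet(\PSO(2n),\bF_2)$: when $n\equiv 2\pmod 4$ the largest power of $2$ dividing $2n$ is $4$, so the class $x_3$ is absent and $a^4=0$, forcing the transgression $d_4(h_3)=a^4$, while for $4\mid n$ one has $a^4\ne 0$ and $h_3$ survives. You instead encode the question in the single $k$-invariant $\kappa\in H^4(B(\bZ/2)^2;\bZ)\cong(\bZ/2)^3$ of the $3$-type (the same formalism the paper uses for the adjoint groups of $E_6$ and $E_7$ in Theorem \ref{thm:maponH3}(4)), detect $\kappa$ by restriction to the three double covers --- legitimate, since the combined restriction $(\bZ/2)^3\to(\bZ/2)^3$ has matrix with determinant $1$ mod $2$ in the basis $u^2,uv,v^2$ --- dispose of the $\SO(2n)$ cover using Theorem \ref{thm:maponH3}(3), and settle the half-spin case by identifying $\Spin(2n)/\langle\omega\rangle\to\bR\bP^{2n-1}$ with the spin frame bundle of $T\bR\bP^{2n-1}$, so that the relevant transgression is $w_4(T\bR\bP^{2n-1})=\binom{2n}{4}t^4$, nonzero exactly for $n\equiv 2\pmod 4$. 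What this buys: you avoid Borel's Th\'eor\`eme 11.5 entirely, replacing it with $w(T\bR\bP^m)=(1+t)^{m+1}$ plus the standard facts that $H^4(B\Spin(2n-1);\bZ)\cong\bZ$ is generated by a class reducing mod $2$ to $w_4$ and that $H^4(\bR\bP^{2n-1};\bZ)\to H^4(\bR\bP^{2n-1};\bF_2)$ is injective; and as a byproduct you resolve the one case excluded from Theorem \ref{thm:maponH3}(3) and Remark \ref{rem:SUn}, namely the half-spin groups for $n$ even, where $\pi^*$ is an isomorphism if and only if $4\mid n$ (consistent, for $n=4$, with triality identifying the half-spin groups with $\SO(8)$). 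The bookkeeping you flag as the hard part does all go through: $\omega\notin\Spin(2n-1)$ since $-I\notin\SO(2n-1)$, the right $\Spin(2n-1)$-action descends because $\omega$ is central, and the underlying $\SO(2n-1)$-bundle is literally $\SO(2n)/\{\pm I\}$, the oriented orthonormal frame bundle of $T\bR\bP^{2n-1}$, with the choice of spin structure immaterial because the answer is determined by $w_4$ alone.
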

\begin{proof}
Since we have already shown in Theorem \ref{thm:maponH3}(3) that 
the covering $\Spin(2n)\to \SO(2n)$ induces an isomorphism on $H^3$,
it suffices to study the behavior of the covering map $\SO(2n)\to \PSO(2n)$
on the torsion-free part of $H^3(\PSO(2n))$.  For this we use the
fibration
\begin{equation}
\SO(2n) \to \PSO(2n) \to \bR\bP^{\infty}
\label{eq:fibPSO}
\end{equation}
together with Borel's calculation of $H^\bullet(\SO(2n),\bF_2)$ in
\cite[Th\'eor\`eme 8.7]{MR0064056} and of $H^\bullet(\PSO(2n),\bF_2)$
in \cite[Th\'eor\`eme 11.5]{MR0064056}.  In low degrees, the $E_2$
term for the Serre spectral sequence of \eqref{eq:fibPSO} 
looks like Figure \ref{fig:SSSPSO}.  (Recall that we are assuming $n\ge 4$.)
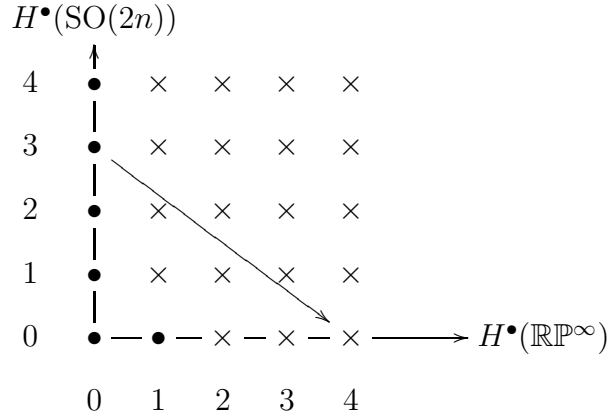
\begin{figure}[hbt]
\[
\xymatrix@!0{
&H^\bullet(\SO(2n))&&&&&&&\\
4&\bullet\ar[u]&\times&\times&\times&\times&&&\\
3&\bullet\ar@{-}[u]\ar@{-}[u]\ar@[red][rrrrddd]
&\times&\times&\times&\times&&&\\
2&\bullet\ar@{-}[u]&\times&\times&\times&\times&&&\\
1&\bullet\ar@{-}[u]&\times&\times&\times&\times&&&\\
0&\bullet\ar@{-}[r] \ar@{-}[u] &\bullet\ar@{-}[r] &\times\ar@{-}[r]&\times\ar@{-}[r]&\times\ar[rrr]&&& H^\bullet(\bR\bP^{\infty})\\
&0&1&2&3&4&&&
}\]
\caption{The Serre SS for $\bF_2$-cohomology of $\PSO(2n)$.
  The symbol $\bullet$ denotes a simple generator ($h_j$ on
  the vertical axis, $a$ on the horizontal axis).  The diagonal
  arrow shows the transgression $d_4$ for the case $n\equiv 2\pmod 4$.}
\label{fig:SSSPSO}
\end{figure}
Borel proves in \cite[Th\'eor\`eme 8.7]{MR0064056} that
\[
H^\bullet(\SO(2n),\bF_2)\cong
\left\langle h_1, h_2, \cdots, h_{2n-1}\right\rangle,
\]
with $h_j^2 = 0$ if $2j > 2n-1$, $h_j^2 = h_{2j}$ if $2j < 2n-1$,
and with $\Sq^1h_j=0$ exactly when $j$ is odd and $\ge 3$.
Thus $H^3(\SO(2n),\bF_2)$ is spanned by $h_1h_2=h_1^3$ and $h_3$,
but the former is not the reduction of an integral class,
whereas $h_3$ is the mod $2$ reduction of the generator of
$H^3(\SO(2n),\bZ)$.  Since the map $H^3(\PSO(2n)) \to H^3(\SO(2n))$
can only be either an isomorphism or multiplication by $2$ on
the torsion-free part of $H^3(\PSO(2n))$, it suffices to determine
whether $h_3$ on the vertical axis survives to $E_\infty$.  If it
does, that means the map is an isomorphism on the torsion-free
summand, and if it doesn't, that means the map is multiplication
by $2$ on the torsion-free summand.

Now by \cite[Th\'eor\`eme 11.5]{MR0064056}, 
\[
H^\bullet(\PSO(2n),\bF_2) \cong
\left(\bF_2(a)/(a^s)\right)\otimes A,\quad A=\left\langle x_1, x_2, \cdots, \widehat{x_{s-1}}, \cdots, x_{2n-1}\right\rangle,
\]
where $a$ is in degree $1$,
$s\ge 4$ is the largest power of $2$ dividing $2n$, and
in the algebra $A$, $x_1^2=x_2$, $x_2^2=x_4$, etc.,
and the monomials in the $x$'s with no $x_j$ repeated are a basis for
$A$ over $\bF_2$. It is easy to see that in terms of the picture in
Figure \ref{fig:SSSPSO}, $a$ has to correspond to the generator
of $H^1(\bR\bP^\infty,\bF_2)$, and $x_j$ should map to $h_j$ in
most cases.

If $n\equiv 0\pmod 4$, then $s\ge 8$
and $a^4\ne 0$, so $h_3$ on the vertical axis has to survive to $E_\infty$.
However, when $n\equiv 2\pmod 4$, then $s = 4$ and there is no $x_3$
class in $H^3(\PSO(2n),\bF_2)$, while $a^4=0$.  This means there has
to be a transgression arrow as in Figure \ref{fig:SSSPSO}
sending $h_3$ on the vertical axis to $a^4$ on the horizontal axis.
In this case, $H^3(\PSO(2n),\bF_2)$ is spanned by $a^3$ and
$ax_2 + a^2x_1$, which are reductions of integral classes, as well as
$ax_2$ and $x_1x_2=x_1^3$, which have nontrivial Bocksteins and thus
cannot come from integral classes.  And none of these map to $h_3$.
That concludes the proof.
\end{proof}

\subsection{T-dualities from Langlands duality}
\label{sec:Lang1}

Now we return to the main subject of Section \ref{sec:Lang}, the connection
between Langlands duality and T-duality.  We make no claims for originality
here --- we are just restating the results of \cite{MR3285609} and \cite{MR3361543}
in a form that will be convenient for later calculations.  Let $G$ be a connected compact
simple Lie group, and let $T$ be a maximal torus in $G$.  Associated to $T$
we have the \emph{weight lattice}, the character group $\Hom(T,\bT)$ of $T$,
which can also be identified with $H^1(T)$, and the \emph{coweight lattice}
$H_1(T)$, which can be identified with the set of
cocharacters $\Hom(\bT,T)$.  The weight lattice is also the set of allowable weights 
for irreducible representations of $G$.  Inside the weight lattice is the set 
of \emph{roots} $\Phi(G,T)$ (these are exactly the non-zero
weights of the adjoint representation), which span a sublattice, called the 
\emph{root lattice}.  The weight lattice and the root lattice coincide 
exactly when $G$ is an adjoint group.  The
triple $(H^1(T), H_1(T), \Phi(G,T))$ determines $G$ up to isomorphism.
The \emph{Langlands dual} $G\dual$ of $G$ is the connected compact
simple Lie group obtained by interchanging $H^1(T)$ and $H_1(T)$ and
replacing the roots $\Phi(G,T)$ by the \emph{coroots} $\Phi(G,T)\dual$.
Langlands duality performed twice brings one back to the starting point,
and the duality interchanges simply connected groups and adjoint groups.
The Langlands dual $G\dual$ is locally isomorphic to $G$ if $G$ is of
type A, D, E, F, or G, while Langlands duality interchanges types B and C.
The group $\SO(2n)$ of type $D_n$ is neither simply connected nor
an adjoint group, and is in fact self-dual.
See \cite[\S2]{MR3285609} for a very clear explanation of all of this, as
well as for additional references.  The following is the main theorem
of \cite{MR3285609} and \cite{MR3361543} (though the case of
type $B_2=C_2$ is never mentioned there explicitly).

\begin{theorem}[Daenzer-van Erp, Bunke-Nikolaus]
Let $G$ be a connected compact simple Lie group of rank $n$
with maximal torus $T$, and let $G\dual$
be its Langlands dual.  Assume $G$ is not of type $B_n$ or $C_n$
with $n\ge 3$; this guarantees that $G$ and $G\dual$ are locally
isomorphic {\lp}i.e., have isomorphic Lie algebras{\rp}.
Then there are canonical choices $h\in H^3(G)$
and $h\dual\in H^3(G\dual)$ for which $(G,h)$ and $(G\dual,h\dual)$
are T-dual as principal $\bT^n$-bundles over the flag manifold
$G/T$, in a sense which we will make precise below.
\label{thm:BN}
\end{theorem}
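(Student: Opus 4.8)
The plan is to recall the construction behind \cite{MR3285609} and \cite{MR3361543} and to supply the handful of points needed to put it in the stated form, the result itself being due to those authors.

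First I would fix the common base. Since $G$ is not of type $B_n$ or $C_n$ with $n\ge 3$, the Langlands dual $G\dual$ has the same Dynkin diagram as $G$, hence $\fg\dual\cong\fg$; the flag manifold $X:=G/T$ is then canonically the flag variety of this common complex Lie algebra, and so is $G\dual/T\dual$, giving a canonical diffeomorphism $X\cong G\dual/T\dual$. (Type $B_2=C_2$, not mentioned in \cite{MR3285609,MR3361543}, is governed by the same diagram and causes no trouble, and the rank-one cases are immediate.) Under this identification $G\to X$ is a principal $T$-bundle and $G\dual\to X$ a principal $T\dual$-bundle, and $T\dual$ is canonically the dual torus of $T$ --- its character lattice is the cocharacter lattice of $T$, and vice versa --- so there is a tautological pairing $T\times T\dual\to\bT$.

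Next I would exhibit the data realizing $(G,h)$ and $(G\dual,h\dual)$ as T-dual $\bT^n$-bundles over $X$, in the precise sense to be introduced below (following Bunke--Schick and Bunke--Rumpf--Schick). The class $h$ is the appropriate multiple of the basic $3$-dimensional class of $G$ --- the one detected by the left-invariant $3$-form $(A,B,C)\mapsto\langle A,[B,C]\rangle$, with $\langle\,\cdot\,,\,\cdot\,\rangle$ the negative of the Killing form --- and $h\dual$ is the corresponding class for $G\dual$; since every choice made above is natural in the pair $(G,T)$, the classes $h$ and $h\dual$ are canonical. On the correspondence space $\cC:=G\times_X G\dual$, a principal $(T\times T\dual)$-bundle over $X$ with projections $q\co\cC\to G$ and $q\dual\co\cC\to G\dual$, the tautological pairing provides the ``Poincar\'e'' ingredient tying the two bundles together, and $h$, $h\dual$ are arranged so that $q^*h=(q\dual)^*h\dual$ in $H^3(\cC)$ while each remains fibrewise trivial; together with the Poincar\'e ingredient these are the defining conditions of a T-duality triple over $X$.

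The remaining task, and the step I expect to be the real obstacle, is to verify those axioms and, in particular, to identify $h$ and $h\dual$ with explicit multiples of the canonical generators of $H^3(G)$ and $H^3(G\dual)$. For the axioms one compares the Serre spectral sequences of $G\to X$ and $G\dual\to X$: the transgression $H^1(T)\to H^2(X)$ of the first is the inclusion of the character lattice of $T$, that of the second the inclusion of the cocharacter lattice of $T$, so the compatibility demanded by the T-duality triple amounts precisely to the duality of root data. The delicate matching of $h$ and $h\dual$ is just what forces one to compute the pull-back maps on $H^3$ along coverings of simple groups, which is the purpose of Theorems~\ref{thm:maponH3} and~\ref{thm:PSO2n}; the resulting identification of the multiples is carried out in Theorem~\ref{thm:Tdualclasses}. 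For the present statement, however, the verification is already done in \cite{MR3285609} and \cite{MR3361543}, so here it suffices to add the identification of base spaces above and the remark on type $B_2=C_2$.
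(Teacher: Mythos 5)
Your proposal is correct and follows essentially the same route as the paper: the theorem is treated as a restatement of the cited results, with the common base $G/T\cong G\dual/T\dual$ fixed first, the classes $h$, $h\dual$ characterized via the transgressions/characteristic classes of the two torus bundles (the paper pins this down in Theorem~\ref{thm:BN1} by taking the image of $\phi\circ c(p\dual)$ in $E_\infty^{2,1}\cong H^3(G)$ and checking $d_2$-closedness via $W$-invariance of the Killing form), and the explicit identification of the multiples deferred to Theorems~\ref{thm:maponH3}, \ref{thm:PSO2n}, and \ref{thm:Tdualclasses}. Your correspondence-space condition $q^*h=(q\dual)^*h\dual$ is the same compatibility the paper later exploits in the proof of Theorem~\ref{thm:Tdualclasses}.
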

\begin{remark}
\label{rem:BN}
Theorem \ref{thm:BN} actually requires some further interpretation, because
the term ``T-duality'' has various meanings in the literature.  For example the
definition used in \cite{MR3285609} is much weaker than the one used in \cite{MR3361543},
which comes from \cite{MR2287642}.  One could also work just as well with
the definition in \cite{MR2222224}.  However, the definition used in
\cite{MR3285609} is not strong enough for getting the results on twisted $K$-theory
which are our main interest here.  Just to fix terminology for what follows,
we can phrase the definition of T-duality as follows.  A set of
\emph{T-duality data} over a
a space $X$ begins with a pair $(E,h)$ consisting of a principal $\bT^n$-bundle
$p\co E\to X$ and a class $h\in H^3(E)$.  However, this by itself is not
enough for a (classical) T-dual to be defined; one needs two additional things.
First, the class $h$ must pull back to $0$ on each torus fiber.  In the
case where $E=G$ is a connected compact simple Lie group and $X=G/T$,
this condition is automatic \cite[Corollary 8.3]{MR3361543}.\footnote{For another proof,
note that it suffices to check this for cohomology with coefficients in
$\bR$.  But $H^3(G,\bR)\cong H^3_{\text{deR}}(G,\bR)$ is generated
by the $3$-form $(X,Y,Z)\mapsto \langle X, [Y, Z]\rangle$,
$X,Y,Z\in \fg$ (the Lie algebra) and $\langle\ ,\ \rangle$ the Killing form.
Since the bracket vanishes on the Lie algebra of $T$, this form
restricts to $0$ in $H^3_{\text{deR}}(T,\bR)$.} Then to get
a well-defined T-dual, one needs a choice of a lifting $X\to \widetilde R$ of the
classifying map $X\to R$ of the pair $(E,h)$ over $X$.  The classifying spaces
$R$ and $\widetilde R$ are discussed in \cite[\S5]{MR2222224}.  The space
$\widetilde R$ is a two-stage Postnikov system
\begin{equation}
\xymatrix{K(\bZ,3)\ar[r] & \widetilde R\ar[d] \\ & K(\bZ^n,2) \times K(\bZ^n,2) ,}
\label{eq:classsp}
\end{equation}
with $k$-invariant $x_1\,y_1+\cdots+x_n\,y_n$ (where $x_j$ and $y_j$
are the canonical classes in $H^2$ of the first and second $K(\bZ^n,2)$ 
factors, respectively),
and $R$ is a more complicated space that has $\widetilde R$ as its
universal covering. In \eqref{eq:classsp}, the bundle projection projected
into the first factor corresponds to the characteristic class $c(p)$ of the torus
bundle $E\xrightarrow{p} X$, while the bundle projection projected
into the second factor corresponds (when $\dim T=1$ --- this case is
easier to explain) to the push-forward of the class $h$
under the Gysin map $p_!\co H^3(E) \to H^2(X)$.  
T-duality interchanges these.  In our situation,
$X=G/T$ will be simply connected, so existence of a lift from $R$ to
$\widetilde R$ is automatic, but we will see that there is a canonical lift.

Finally, there is one additional complication.  In the case of Langlands duality,
$G$ and $G\dual$ are really torus bundles over two different spaces,
$G/T$ and $G\dual/T\dual$.  If $G$ is of type $B$ or $C$ with rank $n\ge 3$,
then these spaces are not even homotopy equivalent, although they are
homotopy equivalent after inverting $2$ \cite{MR0131278}.  However
in all cases except these, there is a diffeomorphism $G/T\to G\dual/T\dual$
coming from an isomorphism of root systems $\Phi\to \Phi\dual$.
(This diffeomorphism can be taken to be the identity in types A, D, and E,
but it can't be the identity in types $B_2=C_2$, $G_2$, or $F_4$,
since long roots have short coroots and \emph{vice versa}.)
\end{remark}

To finish making Theorem \ref{thm:BN} precise, we need to make explicit the
choice of $h$ and the choice of the lift $X\to R$ of the classifying map
of $(G,h)$ over $X=G/T$.  Let $G$, $T$ be as in Theorem \textup{\ref{thm:BN}}, and let 
$\pi\co\tG\to G$ be the universal
cover of $G$, $\tT$ the inverse image of $T$ in $\tG$ {\lp}a maximal torus
in $\tG${\rp}.  Let $(G\dual, T\dual)$ be the Langlands dual of
$(G,T)$ and let $\pi\dual\co\widetilde{G\dual}\to G\dual$ be its universal cover,
$\widetilde{T\dual}$ the inverse image of $T\dual$ in $\widetilde{G\dual}$.
Then Theorem \textup{\ref{thm:BN}} holds for the following explicit
choice of $h$ and a lift.  Recall that $\tG$ is $2$-connected and that
$G/T\cong \tG/\tT$ is a complex projective algebraic variety with a cell decomposition
with only even-dimensional cells, indexed by elements of the Weyl group
$W=N_G(T)/T$.  Thus the cohomology of $G/T$ is torsion-free and concentrated
in even degrees, and the sum of the Betti numbers is $|W|$,
while $\tG$ has no cohomology in degrees $1$ and $2$ and
$H^3(\tG)\cong \bZ$.  The group $G$ in general has torsion in its cohomology
in degree $2$ but also has $H^3(G)\cong \bZ$. {\lp}The relationship between
$H^3(G)$ and $H^3(\tG)$ was discussed above in Section 
\textup{\ref{sec:fundclass}}.\footnote{As explained there, a slight
adjustment is needed in case $D_n$ for $n$ even when $G=\PSO(2n)$,
since in this case $H^3(G)\cong \bZ\oplus \bZ/2$.  The torsion
summand is harmless here.}{\rp}  In the Serre spectral sequence for the
fibration $\tT\to \tG\to \tG/\tT=G/T$, the differential $d_2$ gives a transgression 
isomorphism $\psi\co H^1(\tT) \xrightarrow{\cong} H^2(G/T)$.  {\lp}Restricted to
$H^1(T)$, its image has finite index, with the quotient identified to
$H^2(G)\cong \Ext(\pi_1(G),\bZ)$, 
via the universal coefficient theorem.{\rp}
Since $H^2(T)$ is naturally identified with $\bigwedge^2(H^1(T))$, and
similarly $H^2(\tT)$ with $\bigwedge^2(H^1(\tT))$, the differential $d_2$
sends $H^2(T)$, respectively $H^2(\tT)$, to $H^2(G/T)\otimes H^1(T)$
{\lp}resp., $H^2(G/T)\otimes H^1(\tT)${\rp} via $a\wedge b \mapsto
\psi(a)\otimes b - \psi(b)\otimes a$.

Now we give a slight reformulation of the recipe given 
in \cite[Theorem 8.5]{MR3361543}, in the notation above.
The principal $T$-bundle $p\co G\to G/T$ has a classifying map
$G/T\to BT$, determined up to homotopy by the induced map on
cohomology $H^2(BT)\to H^2(G/T)$\,\footnote{This is because
$BT$ is a $K(\bZ^n,2)$ space.}, and since $H^2(BT)$ is
canonically isomorphic to $H^1(T)$, we can think of this as
a characteristic class $c(p)\in \Hom(H^1(T), H^2(G/T))$.  
{\lp}We use the letter $c$ since this is precisely the 
Chern class when $n=1$.{\rp} Fix an isomorphism
$\widetilde{G\dual}\to \tG$ {\lp}possible since we are excluding
types B and C with $n\ge 3${\rp} and let $\phi\co H^2(G\dual/T\dual) \to
H^2(G/T)$ be the induced isomorphism.
\begin{theorem}[Bunke-Nikolaus]
\label{thm:BN1}
In this situation,
\[
\begin{aligned}
c(p)&=\psi\circ \pi^*\co H^1(T)\to H^2(G/T),\quad\text{\textup{and}} \\
\phi\circ c(p\dual)&=\phi\circ\psi\dual\circ (\pi\dual)^*\co
H^1(T\dual)\to H^2(G/T).
\end{aligned}
\]
Let $h\in H^3(G)$ be the image in $E_\infty^{2,1}\cong H^3(G)$
of the composite $\phi\circ c(p\dual)\co
H_1(T) = H^1(T\dual)\to H^2(G/T)$, viewed in
$\Hom(H_1(T), H^2(G/T))\cong H^2(G/T)\otimes H^1(T)$, 
which makes sense since this element is easily seen to be
killed by $d_2$.  Similarly for $h\dual \in H^3(G\dual)$.
Then $(G, h)$ together with the map $G/T\to \tR$
defined by the homotopy commutative diagram of fibrations
\[
\xymatrix{
T\ar@{^{(}->}[r] \ar@{=}[d] &G \ar[r]^p \ar[d]_{(h, \phi\circ c(p\dual))}& G/T\ar@{.>}[d] \\
T \ar[r] & K(\bZ, 3)\times K(\bZ^n,2)\ar[r] & \tR}
\]
{\lp}the dashed arrow is a homotopy push-out{\rp}
is a set of T-duality data, and the dual data corresponds to
$(G\dual, h\dual)$ viewed also as a bundle over $G/T$
via $\phi$.
\end{theorem}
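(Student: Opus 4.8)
As the preceding discussion makes clear, this theorem is a restatement of \cite[Theorem 8.5]{MR3361543} in the notation fixed above, so the plan is not to reprove from scratch that the recipe yields valid T\nobreakdash-duality data --- that is done in \cite{MR3361543}, resting on \cite{MR2222224} --- but to pin down $c(p)$, $c(p\dual)$, $h$, $h\dual$ and the lift and identify them with the objects there. First I would establish the two displayed formulas by naturality of the transgression in maps of Serre spectral sequences. The covering $\pi\co\tG\to G$ is a map of principal-torus-bundle fibrations $(\tT\to\tG\to\tG/\tT)\to(T\to G\to G/T)$ which is the canonical diffeomorphism $\tG/\tT\cong G/T$ on the base and $\pi|_{\tT}$ on the fibre; since the transgression $d_2\co H^1(\mathrm{fibre})\to H^2(\mathrm{base})$ of a principal $T$\nobreakdash-bundle is its characteristic class, and is $\psi$ for $\tG$ by definition, naturality forces $c(p)=\psi\circ\pi^*$. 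The same argument for $\pi\dual$ gives $c(p\dual)=\psi\dual\circ(\pi\dual)^*$, hence the second formula after applying $\phi$. After composing the fixed isomorphism $\alpha\co\widetilde{G\dual}\to\tG$ with an inner automorphism so that it carries $\widetilde{T\dual}$ onto $\tT$, naturality of the transgression for $\alpha$ gives $\phi\circ\psi\dual=\psi\circ(\alpha_T^*)^{-1}$ with $\alpha_T=\alpha|_{\widetilde{T\dual}}$, so that
\[
\phi\circ c(p\dual)=\psi\circ\lambda,\qquad \lambda:=(\alpha_T^*)^{-1}\circ(\pi\dual)^*\co H_1(T)=H^1(T\dual)\to H^1(\tT).
\]

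The heart of the matter is then a single vanishing in $H^4(G/T)$. By the Leibniz rule and $c(p)=\psi\circ\pi^*$, the differential $d_2$ on $E_2^{2,1}=H^2(G/T)\otimes H^1(T)$ (for $T\to G\to G/T$) is $\omega\otimes y\mapsto\omega\cup\psi(\pi^*y)$, so, expanding $\phi\circ c(p\dual)=\psi\circ\lambda$ in dual bases $\{f_j\}\subset H^1(T)$, $\{f_j^*\}\subset H_1(T)$, one finds $d_2(\phi\circ c(p\dual))=\sum_j\psi(\lambda(f_j^*))\cup\psi(\pi^*f_j)\in H^4(G/T)$. Up to sign this is exactly the pullback of the $k$\nobreakdash-invariant $x_1y_1+\dots+x_ny_n$ of $\tR$ along $(c(p),\phi\circ c(p\dual))\co G/T\to K(\bZ^n,2)\times K(\bZ^n,2)$; so showing it is zero simultaneously places $\phi\circ c(p\dual)$ in $E_\infty^{2,1}$ --- making $h$ well defined --- and produces the lift $G/T\to\tR$. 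I would prove the vanishing as follows. By the rational Borel presentation $H^\bullet(G/T;\bQ)\cong\operatorname{Sym}^\bullet(H^1(\tT)\otimes\bQ)/(\text{positive-degree }W\text{-invariants})$, in which $\psi$ identifies $H^1(\tT)$ with the degree\nobreakdash-$2$ part and $\cup$ is multiplication, the class is $[\sum_j\lambda(f_j^*)\odot\pi^*(f_j)]$, the image of a symmetric tensor in $\operatorname{Sym}^2(H^1(\tT))$. Since $\pi_*\co H_1(\tT)\hookrightarrow H_1(T)$ is simply the inclusion of the coroot lattice into the cocharacter lattice, a short computation identifies this tensor, as a symmetric form on $H_1(\tT)$, with $(u,v)\mapsto\tfrac12(\langle\lambda(v),u\rangle+\langle\lambda(u),v\rangle)$, and this form is $W$\nobreakdash-invariant because $\pi^*$, $(\pi\dual)^*$ and the group isomorphism $\alpha$ all intertwine Weyl actions. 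As the reflection representation of $W$ is absolutely irreducible, the space of $W$\nobreakdash-invariant quadratic forms is one-dimensional, spanned by the Casimir form $q$, which generates the ideal of invariants in degree $4$; hence the class is $0$ in $H^4(G/T;\bQ)$, and so integrally since $H^\bullet(G/T)$ is torsion-free. (The identification $E_\infty^{2,1}\cong H^3(G)$, up to the torsion summand when $G=\PSO(2n)$ with $n$ even, which is killed, follows from $E_\infty^{1,2}=0$ and $E_\infty^{0,3}=0$, both forced by $H^\bullet(G/T)$ being concentrated in even degrees.) The class $h\dual$ is defined symmetrically.

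It then remains to recognise the rest of the statement inside \cite[Theorem 8.5]{MR3361543}. The diagram is a homotopy-commutative map of fibration sequences from $T\to G\to G/T$ to the fibration $T\to K(\bZ,3)\times K(\bZ^n,2)\to\tR$ belonging to the structure of $\tR$ (\cite[\S5]{MR2222224}): the middle column records $h$ on the $K(\bZ,3)$ factor and $\phi\circ c(p\dual)$ on the $K(\bZ^n,2)$ factor, and the dashed arrow is the induced map on bases, realised as the homotopy pushout indicated; its existence is precisely the vanishing just proved, and the value $p_!h=\phi\circ c(p\dual)$ is built into the definition of $h$. This is the data of \cite[Theorem 8.5]{MR3361543}, so Theorem \ref{thm:BN} applies and $(G,h)$ with this lift is a set of T\nobreakdash-duality data. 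For the dual data I would invoke that Langlands duality is an involution --- $(G\dual)\dual\cong G$, $(T\dual)\dual\cong T$ --- and that the whole construction is symmetric under $(G,p,c(p),h)\leftrightarrow(G\dual,p\dual,c(p\dual),h\dual)$, which on the level of $\tR$ is the automorphism swapping the two $K(\bZ^n,2)$ factors, under which $\sum x_jy_j$ is invariant; hence the T\nobreakdash-dual of $(G,h)$ is exactly $(G\dual,h\dual)$, regarded over $G/T$ via $\phi$.

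The main obstacle I expect is the $W$\nobreakdash-invariance computation of the second step: it is the only substantive point, and it is the one place the hypothesis that $G$ is not of type $B_n$ or $C_n$ with $n\ge 3$ enters, since that is exactly what makes the lattice isomorphism $\alpha_T^*$ --- equivalently the isomorphism $\alpha$ --- exist. Everything else is bookkeeping: aligning the orientation and sign conventions for the two $K(\bZ^n,2)$ factors of $\tR$, for its $k$\nobreakdash-invariant, and for the Gysin map $p_!$, with those of \cite{MR3361543} and \cite{MR2222224}, and noting that $\alpha$ --- which one is free to choose, and which is automatically Weyl-equivariant as a group isomorphism --- induces the $\phi$ used throughout.
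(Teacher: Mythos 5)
Your proposal is correct and takes essentially the same route as the paper: the one substantive point in both is the vanishing of $d_2$ on $\phi\circ c(p\dual)$, which you and the paper both reduce to $W$-invariance of the associated quadratic form together with the fact that the unique (up to scale) $W$-invariant element of $S^2(H^1(\tT)\otimes\bQ)$ is the Killing form and hence dies in the Borel presentation of $H^\bullet(G/T)$. The identification of the dual data is then, as in the paper, a consequence of the manifest symmetry of the construction in $(G,h)\leftrightarrow(G\dual,h\dual)$; your extra details (naturality of the transgression for the displayed formulas, and the identification of $d_2$ with the pullback of the $k$-invariant of $\tR$) merely flesh out what the paper treats as part of the setup.
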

\begin{proof}
The one thing to check is that $\phi\circ c(p\dual)$ is killed by
$d_2$ and thus gives a genuine class in $H^3(G)$. This involves
showing that the map is $W$-invariant.  But 
$S^2(H^1(T)\otimes_\bZ \bQ)$ contains a unique $W$-invariant
element up to scale, namely the Killing form, so we just need
to check that $\psi^{-1}\circ \phi\circ c(p\dual)\co H_1(T) \to H^1(\tT)$
is a multiple of this form, which is immediate
from the formula.

As explained in \cite{MR3361543}, it is then almost a tautology that
the T-dual of this data is the data for $(G\dual, h\dual)$, since
$G$ and $G\dual$ play completely symmetrical roles here.
\end{proof}

\begin{corollary}
\label{cor:twistedK}
In the setting of Theorems \textup{\ref{thm:BN}} and 
\textup{\ref{thm:BN1}},  one gets an isomorphism
of twisted $K$-groups 
\[
K^\bullet(G, h) \cong 
K^{\bullet+n}(G\dual, h\dual).
\]
\end{corollary}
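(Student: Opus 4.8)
The plan is to deduce the corollary directly from Theorem~\ref{thm:BN1} together with the general structure theorem for topological T-duality. Recall that a set of T-duality data $(E\xrightarrow{p}X,h)$ with $n$-dimensional torus fibers, equipped with a classifying lift $X\to\tR$ as in \eqref{eq:classsp}, has a well-defined T-dual $(\hat E\xrightarrow{\hat p}X,\hat h)$, characterized by the property that on the correspondence space $E\times_X\hat E$ (a $\bT^n\times\bT^n$-bundle over $X$) the pulled-back twists $p^*h$ and $\hat p^*\hat h$ become isomorphic; the fundamental theorem of topological T-duality, in the form established by Bunke--Rumpf--Schick \cite{MR2287642} and compatible with the formulation of \cite{MR2222224}, then produces an isomorphism of twisted $K$-groups
\[
K^\bullet(E,h)\;\cong\;K^{\bullet+n}(\hat E,\hat h),
\]
the degree shift by $n=\dim T$ arising from the fiberwise Fourier--Mukai transform along the Poincar\'e bundle (equivalently, from $n$ successive single-circle T-dualities, each contributing a shift of $1$).

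I would then apply this with $X=G/T$, $E=G$, and $h$ the class of Theorem~\ref{thm:BN1}, together with the lift $G/T\to\tR$ described there. Theorem~\ref{thm:BN1} asserts precisely that this constitutes a genuine set of T-duality data and that its T-dual is $(G\dual,h\dual)$, where $G\dual$ is regarded as a torus bundle over $G/T$ by transporting its natural bundle structure over $G\dual/T\dual$ along the diffeomorphism $\phi\co G/T\to G\dual/T\dual$ coming from the isomorphism of root systems $\Phi\to\Phi\dual$ (the identity in types A, D, E; a nontrivial diffeomorphism interchanging long and short roots in types $B_2=C_2$, $G_2$, $F_4$). Thus the structure theorem yields $K^\bullet(G,h)\cong K^{\bullet+n}\bigl(\phi^*(G\dual),\,\phi^*(h\dual)\bigr)$. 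Since $\phi$ is a diffeomorphism, the bundle $\phi^*(G\dual)\to G/T$ is equivariantly isomorphic to $G\dual\to G\dual/T\dual$ and this isomorphism carries $\phi^*(h\dual)$ to $h\dual$, inducing $K^{\bullet}(\phi^*(G\dual),\phi^*(h\dual))\cong K^{\bullet}(G\dual,h\dual)$; composing gives the asserted isomorphism.

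The step I would regard as the \emph{crux} is the verification that the explicit lift of Theorem~\ref{thm:BN1} is the one matching the classifying space $\tR$ of \eqref{eq:classsp}, so that the data genuinely falls within the scope of the Bunke--Rumpf--Schick isomorphism rather than merely the weaker notion of \cite{MR3285609} (which, as noted in Remark~\ref{rem:BN}, is not strong enough to force a twisted $K$-theory isomorphism). But this is exactly the content of Theorem~\ref{thm:BN1} and the homotopy-commutative diagram of fibrations displayed there; once that theorem is in hand the corollary is essentially formal, and the only routine points left are the standard mod-$2$ bookkeeping of the degree shift (trivial shift when $n$ is even) and the orientability input for the Thom/Gysin isomorphism along the torus fibers, both of which are automatic here.
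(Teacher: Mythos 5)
Your proposal is correct and follows exactly the route of the paper: the corollary is obtained by applying the general twisted $K$-theory isomorphism for T-dual pairs from \cite{MR2287642} or \cite{MR2222224} to the T-duality data furnished by Theorem~\ref{thm:BN1}. The extra detail you supply about the correspondence space, the degree shift by $n$, and the transport along the diffeomorphism $\phi$ is a sound elaboration of what the paper leaves implicit.
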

\begin{proof}
This follows immediately from \cite{MR2287642}
or \cite{MR2222224}, applied to the T-duality.
\end{proof}

\subsection{Twisted $K$-theory for Langlands dual pairs}
\label{sec:twistedKLang}

The results of Section \ref{sec:Lang1}, especially Corollary
\ref{cor:twistedK}, provide new examples of isomorphisms
between twisted $K$-groups on different compact simple Lie groups.
This provides additional motivation for understanding the
computation of such twisted $K$-groups and computing exactly
what one gets in the case of the T-duality pairs coming from
Langlands duality.  We start with a few very explicit examples.

\begin{example}[{$\SU(2)$ and $\SO(3)$}]
    \label{ex:SU2}
We begin with the simplest case of $G=\SU(2)$, $G\dual = \SO(3)$,
the only dual pair in rank $1$ (though this pair has other aliases,
e.g., $G=\Sp(1)$ and $G\dual = \PSp(1)$, or $G=\Spin(3)$ and
$G\dual = \SO(3)$). In this case $G/T=S^2$,
$G$ is topologically $S^3$, $p$ is the Hopf bundle $S^3\to S^2$, and
$G\dual$ is topologically $\bR\bP^3$. It is easy to see that the Chern class
$c(p)=1$ (for the usual identification of $H^2(S^2)$ with $\bZ$)
and $c(p\dual)=2$. Since T-duality for circle bundles (see
for example \cite{MR2080959,MR2116734,MR2560910})
requires $p_!(h)=c(p\dual)$ and $(p\dual)_! =
c(p)$, we easily compute from the Gysin sequences that
$h=2$ and $h\dual=1$, again for the usual identifications of
the $H^3$ groups with $\bZ$.

Let us now check Corollary \ref{cor:twistedK} in this example.
We can compute the twisted $K$-theory via the Atiyah-Hirzebruch spectral
sequence (AHSS)
for twisted $K$-theory, first discussed in \cite{MR679694,MR1018964}
and further explained in \cite{MR2172633,MR2307274}.
Since $G$ and $G\dual$ are $3$-dimensional, the only differential
is $d_3$, and $\Sq^3=0$ for both $G$ and $G\dual$.  So (for any
choice of $h$ and $h\dual$), the twisted $K$-groups are computed
from the spectral sequence shown in Figure \ref{fig:AHSSSU2}.
(By Bott periodicity, the differentials repeat vertically with period $2$.)

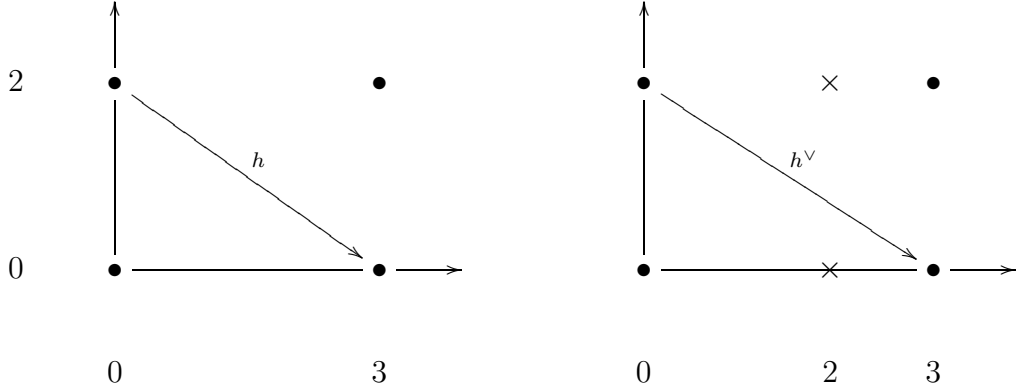
\begin{figure}[hbt]
\[
\xymatrix{
&&&&&&&&&&&\\
2&\bullet \ar[ddrrr]^h \ar[u]&&&\bullet& &&\bullet \ar[ddrrr]^{h\dual} \ar[u]&&\times&\bullet&\\
&&&&&&&&&&& \\
0&\bullet\ar@{-}[rrr] \ar@{-}[uu] &&&\bullet\ar[r] &&
&\bullet\ar@{-}[rrr] \ar@{-}[uu] &&\times&\bullet\ar[r] &\\
&0&&&3&&&0&&2&3&
}\]
\caption{The AHSS for twisted $K$-theory of $\SU(2)$ (left) and $\SO(3)$ (right).
The symbol $\bullet$ denotes a copy of $\bZ$; $\times$ denotes a copy of $\bZ/2$.}
\label{fig:AHSSSU2}
\end{figure}

From the figure, it is apparent that (assuming $h$ and $h\dual$ are both non-zero)
we get $K^\bullet(G,h) = \bZ/h$ in odd
degree, $0$ in even degree.  Similarly, we get $K^\bullet(G\dual,h\dual) = \bZ/h\dual$ 
in odd degree, $\bZ/2$ in even degree.   (This calculation also appears in
\cite[\S3.2]{MR2399311}.)
So with the choice $h=2$ and $h\dual=1$,
we get $\bZ/2$ in odd degree for $G$, $\bZ/2$ in even degree for $G\dual$, 
as is consistent with Corollary \ref{cor:twistedK}.  Note, incidentally, that no other
positive choices for $h$ and $h\dual$ would work here. \qed
\end{example}
\begin{example}[{$\SU(3)$ and $\PSU(3)$}]
  \label{ex:SU3}
Now consider the case of the Langlands dual pair $G=\SU(3)$ and
$G\dual=\PSU(3)$.  We want to compute the appropriate values of $h$ and
$h\dual$ for this case.  Recall from Theorem \ref{thm:maponH3}
that if $\pi\co G\to G\dual$ is the $3$-to-$1$ covering map, then
$\pi^*\co H^3(G\dual)\to H^3(G)$ is an isomorphism in this case.

Since $G$ and $G\dual$ are bundles over $G/T$, we need the cohomology of the
flag variety $G/T$.  This is a classical calculation of Borel
\cite[Proposition 29.2]{MR0051508}:
\[
H^\bullet(G/T) \cong S^\bullet(H^1(T))/S^\bullet(H^1(T))^W,
\]
which in our case reduces to
\begin{multline}
\label{eq:cohomGT}
\bZ[x_1,x_2,x_3]/(x_1+x_2+x_3, \, x_1x_2+x_1x_3+x_2x_3, \, x_1x_2x_3) \\ =
\bZ[x_1,x_2]/(x_1^2+x_2^2+x_1x_2,x_1x_2^2 + x_1^2x_2).
\end{multline}
Here $x_1$, $x_2$, and $x_3$ represent a basis for the weight lattice
of the maximal torus of
$\U(3)$: $x_j(\diag(z_1,z_2,z_3)) = z_j$, and the Weyl group invariants
are generated by the elementary symmetric functions.  Imposing the relation
$x_1+x_2+x_3=0$ is equivalent to restricting to $\SU(3)$.  Inside the
weight lattice $H^1(T)=\bZ x_1 + \bZ x_2$ we have the root lattice
$H^1(T\dual)$, spanned by the simple roots $\alpha =x_1-x_2$ and $x_2-x_3 =
x_1 + 2x_2 = \alpha + 3 x_2$.  So the index of $H^1(T\dual)$ in $H^1(T)$ is
\[
\det \begin{pmatrix} 1 & 1\\ -1 & 2\end{pmatrix} = 3,
\]
as it should be (since $3$ is the degree of the covering $G\to G\dual$).

We would now like to compute $h$ and $h\dual$ explicitly, as multiples
of the generators of $H^3(G)$ and $H^3(G\dual)$.  For this we can split the
torus bundles $G\to G/T$ and $G\dual \to G/T$ into circle bundles and
use \cite[Theorem 5.6]{MR2222224}.  There are many ways to do this,
corresponding to different choices of basis in the weight lattice
$H^1(T)$, but it is convenient to split the bundles as shown:
\[
\xymatrix{ & G\ar[dl]_{\pi_1} \ar[dr]^{\pi_2} & &&&
  G\dual\ar[dl]_{\pi_1\dual} \ar[dr]^{\pi_2\dual} & \\
  X_1 \ar[dr]_{p_1} & & X_2 \ar[dl]^{p_2}&&
  X_1\dual \ar[dr]_{p_1\dual} & & X_2\dual \ar[dl]^{p_2\dual}\\
  & G/T  & &&& G\dual/T\dual = G/T &}
\]
with $c(p_1)=x_1-x_2$, $c(p_2)=x_2$, $c(p_1\dual) = x_1-x_2$,
$c(p_2\dual)=3x_2$.  Note that in this case $X_1=X_1\dual$,
since these are both circle bundles over $G/T$ with the same
Chern class $x_1-x_2$. \cite[Theorem 5.6]{MR2222224} implies
(with our current notation) that
\begin{equation}
\label{eq:Chernclasses}
(\pi_2)_!(h) = p_2^*(c(p_1\dual) )=p_2^*(x_1-x_2), \quad
(\pi_2\dual)_!(h\dual) = (p_2\dual)^*(c(p_1) ) = (p_2\dual)^*(x_1-x_2).
\end{equation}
A simple calculation with the Serre spectral sequence (or the
Gysin sequence, which is basically the same thing) shows that
the cohomology ring of $X_2$ is torsion-free, and is an exterior
algebra on generators in dimensions $2$ and $5$.  Furthermore,
$p_2^*(x_2)=0$ and $p_2^*(x_1)$ generates $H^2(X_2)$.
So from the first equality in \eqref{eq:Chernclasses},
$(\pi_2)_!(h)$ generates $H^2(X_2)$.  It follows that
$h$ has to generate $H^3(G)$, i.e., $h=1$ (with our
usual identifications).  Next we look at $X_2\dual$.  This time,
since $c(p_2\dual)= 3x_2$,
\[
H^2(X_2\dual)=\bZ \left((p_2\dual)^*(x_1) \right)
\oplus (\bZ/3)\left( (p_2\dual)^*(x_2)\right).
\]
From \eqref{eq:Chernclasses}, $(\pi_2\dual)_!(h\dual)$ is still
primitive, so we also have $h\dual = 1$.  Note that
$\pi^*(h\dual) = h$ for the covering $\pi\co G\to G\dual$.

It remains to compute the twisted $K$-theory for $(G,h)$ and
for $(G\dual, h\dual)$ and to check the result of Corollary
\ref{cor:twistedK}.  For $G$ this is easy --- either we can appeal
to \cite{MR2061550} or \cite{MR2263220}, or else we can
compute directly from the AHSS, since $H^*(G)=\bigwedge(x_3,x_5)$
and $h=x_3$, $\Sq^3x_j=0$, so that $d_3$ is multiplication by $x_3$,
which sends $x_3$ to $0$ and $x_5$ to $x_3x_5$.  So
$\ker d_3 = \bZ x_3 + \bZ x_3x_5 = \image d_3$ and $E_4=0$.
Thus $K^\bullet(G,h)=0$ (in both even and odd degree).
In the case of $G\dual$, the cohomology is a bit more
complicated because (via \eqref{eq:cohomPSU}) there is
also a $3$-torsion generator $y$ in degree $2$ with $y^2\ne 0$,
$y^3=0$.  The $5$-dimensional torsion-free generator $x_5$
reduces mod $3$ to $x_1y^2$.  Rationally, things are just as
for $G$, so it suffices to see what happens mod $3$.  The
class $h\dual$ mod $3$ is $x_3$, in the notation of
\eqref{eq:cohomPSU}.  A basis for $H^\bullet(G\dual, \bF_3)$
is given by $1$, $x_1$, $y$, $x_1y$, $x_3$, $y^2$, $x_1x_3$,
$x_3 y$, $x_1 y^2$, $x_1x_3y$, $x_3y^2$, and $x_1x_3y^2$.
Multiplication by $x_3$ kills the $6$ of these monomials
containing an $x_3$ factor, and sends the other $6$ monomials
to these $6$.  So once again we see that $\ker d_3=\image d_3$
and the twisted $K$-theory $K^\bullet(G\dual,
h\dual)=0$ (in both even and odd degree).  This is consistent
with Corollary \ref{cor:twistedK}.  
\end{example}
\begin{example}[{$\Sp(2)=\Spin(5)$ and $\PSp(2)=\SO(5)$}]
  \label{ex:Sp2}
We can do a similar calculation for the Langlands dual pair $G=\Sp(2)$ and
$G\dual=\PSp(2)$, and compute the appropriate values of $h$ and
$h\dual$ for this case.  Recall from Theorem \ref{thm:maponH3}
that if $\pi\co G\to G\dual$ is the $2$-to-$1$ covering map, then
$\pi^*\co H^3(G\dual)\to H^3(G)$ is an isomorphism in this case.

We can take the weight lattice $H^1(T)$ to be $\bZ x_1 \oplus \bZ x_2$,
with the root lattice $H^1(T\dual)$ spanned by the simple
roots $x_1-x_2$ and $2x_2$.  The Weyl group $W$ is the dihedral
group of order $8$ generated by the reflection
$x_1\mapsto x_1, x_2\mapsto -x_2$ and by the rotation
$x_1\mapsto x_2, x_2\mapsto -x_1$.  The ring of $W$-invariants in
$\bZ[x_1,x_2]$ is generated by $x_1^2+x_2^2$ and by $x_1^2x_2^2$,
and $H^\bullet(G/T)\cong \bZ[x_1,x_2]/(x_1^2+x_2^2,x_1^2x_2^2)$,
here with $x_j$ of degree $2$ in the cohomology ring.

In order to apply \cite[Theorem 5.6]{MR2222224},
we again split the
torus bundles $G\to G/T$ and $G\dual \to G/T$  as:
\[
\xymatrix{ & G\ar[dl]_{\pi_1} \ar[dr]^{\pi_2} & &&&
  G\dual\ar[dl]_{\pi_1\dual} \ar[dr]^{\pi_2\dual} & \\
  X_1 \ar[dr]_{p_1} & & X_2 \ar[dl]^{p_2}&&
  X_1\dual \ar[dr]_{p_1\dual} & & X_2\dual \ar[dl]^{p_2\dual}\\
  & G/T  & &&& G\dual/T\dual = G/T &}
\]
with $c(p_1)=x_1-x_2$, $c(p_2)=x_2$, $c(p_1\dual) = x_1-x_2$,
$c(p_2\dual)=2x_2$.  Note that $X_1=X_1\dual$,
since these are both circle bundles over $G/T$ with the same
Chern class $x_1-x_2$. As in Example \ref{ex:SU3},
the cohomology ring of $X_2$ is torsion-free, and is an exterior
algebra on generators in dimensions $2$ and $7$.  Furthermore,
$p_2^*(x_2)=0$ and $p_2^*(x_1)$ generates $H^2(X_2)$.
So from the first equality in \eqref{eq:Chernclasses},
$(\pi_2)_!(h)$ generates $H^2(X_2)$.  It follows that
$h$ has to generate $H^3(G)$, i.e., $h=1$ (with our
usual identifications).  Next we look at $X_2\dual$.  This time,
since $c(p_2\dual)= 2x_2$,
\[
H^2(X_2\dual)=\bZ \left((p_2\dual)^*(x_1) \right)
\oplus (\bZ/2)\left( (p_2\dual)^*(x_2)\right).
\]
From \eqref{eq:Chernclasses}, $(\pi_2\dual)_!(h\dual)$ is still
primitive, so we also have $h\dual = 1$.  Note that
$\pi^*(h\dual) = h$ for the covering $\pi\co G\to G\dual$.

As in the last example, one can check that the twisted
$K$-theory vanishes on both sides of the duality.  (A quick
way of proving this will be given in the following Section
\ref{sec:twistedK}.)
\end{example}  

The phenomena that showed up in Examples \ref{ex:SU2}, \ref{ex:SU3}, 
and \ref{ex:Sp2} can all be explained by the following theorem, which
also explains why we were interested in Theorem \ref{thm:maponH3}.

\begin{theorem}
\label{thm:Tdualclasses}
Let $G$ be a compact simply connected simple Lie group, not of type
$B_n$ or $C_n$ with $n\ge 3$, so that $G\dual$, the corresponding
Langlands dual, is of adjoint type, and there is a covering map
$\pi\co G\to G\dual$.  Then for the Langlands dual pair $G$ and
$G\dual$, with $h$ and $h\dual$ the classes of Theorem 
\textup{\ref{thm:BN1}}, $h\dual$ is the generator of $H^3(G\dual)$
specified by the usual orientation of an embedded rank-$1$ subgroup, i.e.,
it is $1$ when we identify $H^3(G\dual)$ with $\bZ$, and
$h = \pi^*(h\dual)$ as computed in Theorems
\textup{\ref{thm:maponH3}} and \textup{\ref{thm:PSO2n}}.\footnote{In
the exceptional case where $G$ is of type $D_n$ with $n\equiv 0\pmod 4$,
$h\dual$ is still the generator of an infinite cyclic summand
in $H^3(G\dual)$.} Furthermore,
the twisted $K$-theory groups $K^\bullet(G, h)$ and
$K^\bullet(G\dual, h\dual)$ vanish unless $G=\SU(2)$ and
$G\dual = \SO(3)$.
\end{theorem}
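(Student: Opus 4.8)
The plan is to split Theorem~\ref{thm:Tdualclasses} into two halves: the identification of $h$ and $h\dual$ as specific integer multiples of generators, and the vanishing of the twisted $K$-groups. For the first half, I would argue exactly as in Examples~\ref{ex:SU2}--\ref{ex:Sp2}, but in general. Using the recipe in Theorem~\ref{thm:BN1}, $h\dual$ is the image in $E_\infty^{2,1}\cong H^3(G\dual)$ of the class $\psi\dual\circ (\pi\dual)^*$ viewed in $H^2(G\dual/T\dual)\otimes H^1(T\dual)$; but since $G$ is simply connected, $(\pi\dual)^*$ is just $\pi^*$ on $H^1$ for the covering $\pi\co G\to G\dual$, i.e.\ the inclusion of lattices $H^1(T\dual)\hookrightarrow H^1(T)$, followed by the transgression $\psi$. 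Splitting the torus bundle $G\dual\to G/T$ into circle bundles with one factor having Chern class a primitive weight $w$ and pushing $h\dual$ down along the complementary factor (as in \eqref{eq:Chernclasses}), one finds $(\pi_n\dual)_!(h\dual)$ is primitive in $H^2$ of the total space of the first circle bundle, exactly because $G$ (hence $\tG=\widetilde{G\dual}$) contributes the \emph{primitive} generator of $H^3$ in its Serre spectral sequence over $G/T$. Hence $h\dual$ is the generator $1$ of $H^3(G\dual)$. Then $h=\pi^*(h\dual)$ by naturality of the construction under the covering $\pi$, and Theorems~\ref{thm:maponH3} and \ref{thm:PSO2n} compute this integer.

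For the vanishing of $K^\bullet(G,h)$ and $K^\bullet(G\dual,h\dual)$, the cleanest route is to compute $K^\bullet(G\dual,h\dual)$ first, since $h\dual=1$ is the generator, and then transport the answer to $G$ via Corollary~\ref{cor:twistedK}, which gives $K^\bullet(G,h)\cong K^{\bullet+n}(G\dual,h\dual)$; vanishing of one side forces vanishing of the other. So it suffices to show $K^\bullet(G\dual,h\dual)=0$ whenever $G\dual\ne \SO(3)$. I would run the twisted AHSS for $G\dual$ with the first differential $d_3=\Sq^3+h\dual\cup(\,\cdot\,)$. The key observation, which generalizes the rank-$1$ computation, is that for a compact simple Lie group the rational cohomology is an exterior algebra $\bigwedge(x_3,x_{2e_1+1},\dots)$ on generators of odd degree, with the degree-$3$ generator $x_3$ the reduction of $h\dual$; multiplication by $x_3$ on $H^\bullet(G\dual;\bQ)$ has kernel equal to image (the complex $(H^\bullet\otimes\bQ, x_3\cup)$ is acyclic because $x_3$ is a non-zero-divisor-free exterior generator — more precisely $H^\bullet\otimes\bQ$ is a free module of rank $1$ over $\bigwedge(x_3)$ when we have at least one more generator), so the rationalized $E_4$-page already vanishes. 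This handles the free part. The torsion part is where the case analysis lives: one must check, prime by prime $p$ dividing the order of $\pi_1(G\dual)$, that the mod-$p$ AHSS also collapses to zero after $d_3$, using the explicit descriptions of $H^\bullet(G\dual;\bF_p)$ (formulas \eqref{eq:cohomPSU}, \eqref{eq:cohomPSp}, \eqref{eq:E6}, \eqref{eq:AdE6}, \eqref{eq:E7}, \eqref{eq:AdE7}, Borel's $H^\bullet(\PSO(2n);\bF_2)$, etc.) together with the Cartan formula for $\Sq^3$ on each generator.

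The main obstacle is precisely this mod-$p$ torsion bookkeeping. Unlike the rational case, $d_3=\Sq^3+x_3\cup$ is not simply cup product with $x_3$, and the operation $\Sq^3$ can be nonzero on the generators that are not reductions of integral classes (e.g.\ $x_1y$-type classes in $H^3(\PSU(n+1);\bF_p)$, or $h_1^3$ in $H^3(\SO(2n);\bF_2)$), so one has to identify the correct $d_3$ on the full mod-$p$ cohomology ring and verify acyclicity of $(H^\bullet(G\dual;\bF_p), d_3)$. Fortunately the rings above are all of the form (truncated polynomial on one even class) $\otimes$ (exterior algebra), and in every case the degree-$3$ part is small, so one can argue that the $d_3$-complex is a tensor product of acyclic pieces — or, more robustly, that after inverting nothing but with $\bF_p$ coefficients the class $x_3$ (the image of $h\dual$) is still a ``free generator'' in the sense that $H^\bullet(G\dual;\bF_p)$ is free over $\bigwedge(x_3)$, which makes the chain complex contractible. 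The only group for which this fails is $\SO(3)$, where $H^\bullet(\SO(3);\bQ)=\bigwedge(x_3)$ has \emph{no} extra generator and the $x_3\cup$ complex is $\bQ\xrightarrow{0}\bQ$, not acyclic — giving the exceptional nonzero answer $\bZ/2$ recorded in Example~\ref{ex:SU2}. I would also double-check the boundary cases $B_2=C_2$ and the $D_n$, $n\equiv 0\pmod 4$ footnote situation, where $h\dual$ only generates an infinite cyclic summand, to be sure the torsion summand in $H^3$ does not obstruct the argument.
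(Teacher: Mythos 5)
Your overall strategy is workable in outline but inverts the paper's argument at both key points, and in each case the step you leave as ``bookkeeping'' is exactly where the real content lies. For the identification of the classes, the paper does \emph{not} generalize the circle-bundle splittings of Examples \ref{ex:SU3}--\ref{ex:Sp2}: it proves $h=\pi^*(h\dual)$ by pulling both classes back to the Poincar\'e bundle $E$ over $G/T$ and observing that $p^*(p\dual)$ has vanishing characteristic class (since $p^*$ kills the image of $c(p)$), hence is a trivial torus bundle whose pullback on $H^3$ is injective --- your appeal to ``naturality of the construction under $\pi$'' skips precisely this point, which is not formal because $h$ and $h\dual$ live on different total spaces. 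Primitivity of $h\dual$ is then obtained from Lemma \ref{lem:endR}: if $h\dual=k\alpha$ with $k>1$, the T-duality classifying map factors through $f_k$, forcing $c(p)$ to be divisible by $k$ and hence $G$ to admit a $k$-fold cover, contradicting simple connectivity. Your version --- iterated circle-bundle reduction in arbitrary rank with a suitable basis of the weight lattice --- could probably be pushed through case by case, but as written the assertion that the relevant Gysin pushforward is primitive ``because $G$ contributes the primitive generator'' is close to assuming what is to be proved, and you have also transposed the roles of $c(p)$ and $c(p\dual)$ in the recipe of Theorem \ref{thm:BN1} ($h\dual$ is built from $c(p)$, not from $\psi\dual\circ(\pi\dual)^*=c(p\dual)$).

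The more serious gap is in the vanishing statement, where you compute on the wrong side of the duality. The paper first observes that $h=\pi^*(h\dual)\in\{1,2\}$ by Theorems \ref{thm:maponH3} and \ref{thm:PSO2n}, then invokes the complete computation of $K^\bullet(G,h)$ for \emph{simply connected} $G$ from \cite{MR2061550,MR2263220}: the answer is a sum of copies of $\bZ/h'$ with $h'=h/\gcd(h,m)$ and $m$ even, so $h'=1$ and the groups vanish except for $\SU(2)$; Corollary \ref{cor:twistedK} then transports the vanishing to $G\dual$. You instead propose to show $K^\bullet(G\dual,1)=0$ directly on the adjoint group via the twisted AHSS mod $p$ --- but this is exactly the computation that is \emph{not} in the literature (the paper devotes Section \ref{sec:twistedK} to developing new tools for non-simply-connected groups), and your structural claim that $H^\bullet(G\dual;\bF_p)$ is free over $\bigwedge(x_3)$ with $x_3$ the mod-$p$ reduction of $h\dual$ is unverified and not automatic: e.g.\ for $\PSU(2q)$ with $q$ odd the reduction of the integral generator of $H^3$ is $x_1y$ (the class $x_3$ is absent from \eqref{eq:cohomPSU}), and for $\PSO(2n)$ with $n\equiv 2\pmod 4$ the class $h_3$ transgresses, so the degree-$3$ class one must cup with is not a clean exterior generator and the ``tensor product of acyclic pieces'' picture breaks down. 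One would also have to identify the correct odd-primary analogue of $\Sq^3$ in $d_3$ and justify the prime-by-prime splitting of the integral spectral sequence. None of this is fatal in principle, but until it is carried out the vanishing claim is not proved; the paper's detour through the simply connected side is what makes the theorem a short corollary of known results rather than a long new computation.
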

\begin{proof}
The key here is to observe that the bundle $p\co G\to G/T$
and the bundle $p\dual\co G\dual\to G\dual/T\dual$ are related
by the covering map $\pi: G\to G\dual$.  Thus if $E$ is the Poincar\'e
bundle (the pull-back of $p$ and $p\dual$), we have a commuting diagram
\begin{equation}
\label{eq:bundles}
\xymatrix{
& E \ar[dl]_{p^*(p\dual)} \ar[dr]^{(p\dual)^*(p)} &\\
G \ar[rr]^\pi \ar[dr]_p& & G\dual\ar[dl]^{p\dual} \\
& G/T \,&.
}
\end{equation}
First, look at the classes $h$ and $h\dual$.  They both pull back to
the same thing in $H^3(E)$, so
\[
({p^*(p\dual)})^*h = ({(p\dual)^*(p)})^*(h\dual).
\]
By commutativity of \eqref{eq:bundles}, $(p\dual)^*(p)
= \pi\circ p^*(p\dual)$, so that means
that
\[
\left({p^*(p\dual)}\right)^* \left(h - \pi^*h\dual\right) = 0,
\]
or $h - \pi^*h\dual =0 $ modulo the kernel of 
$\left({p^*(p\dual)}\right)^*$.  But
\[
c(p^*(p\dual)) = p^*(c(p\dual)) = p^*(c(p)\circ \pi).
\]
However, from the Serre spectral sequence of the $T$-bundle
$p$, $p^*$ kills the image of $c(p)\co H^1(T)\to H^2(G/T)$.
So the Chern class of $p^*(p\dual)$ vanishes, i.e., this is a 
trivial bundle, and hence the pull-back map on cohomology is
faithful.  Hence $h= \pi^*(h\dual)$.

Next, we show that $h\dual$ has to be primitive, i.e., has to
be $\pm 1$ as a multiple of the generator of $H^3(G\dual)$.
(The multiple is positive because of the normalization in
Theorem \ref{thm:BN1}, so it's therefore exactly $1$.)
This follows from the simple connectivity of $G$ together with
the following Lemma \ref{lem:endR}.  Indeed, suppose
$h\dual = k\alpha$ with $\alpha\in H^3(G\dual)$, $k>1$.  Then
by Lemma \ref{lem:endR}, the classifying map for the
T-duality data for $(G\dual\to G/T, h\dual)$
factors as $f_k\circ (\text{something})\co G/T\to \tR$, and
so by the Lemma the classifying map $c(p)\co G/T\to BT$ for
the T-dual is divisible by $k$.  But this would imply that $G$ has
a $k$-fold covering, contradicting the assumption that $G$ is
$1$-connected.

Finally, putting what we have done so far together
with Theorems \ref{thm:maponH3} and \ref{thm:PSO2n}, we see that
$h=1$ or $2$.  By the results of \cite{MR2061550,MR2263220},
that implies that $K^\bullet(G,h)=0$ unless $G=\SU(2)$.  (In all
other cases, we get a direct sum of copies of $\bZ/h'$, where
$h' = \frac{h}{\gcd(h, m)}$ and $m$ is divisible by $2$, so that
$h'=1$.)  By Corollary \ref{cor:twistedK}, we also get
$K^\bullet(G\dual,h\dual)=0$.
\end{proof}
\begin{lemma}
Let $k>1$ be a positive integer and let
$\tR$ be the classifying space for T-duality data for $\bT^n$-bundles
over simply connected
spaces, as defined in \cite{MR2222224}; recall that this is a fibration
as diagrammed in \textup{\eqref{eq:classsp}}.  Then, up to homotopy, $\tR$
has a unique endomorphism $f_k$ which induces multiplication by $k$ on
$\pi_3(\tR)\cong \bZ$ and which on $\pi_2(\tR)\cong \bZ^n \times \bZ^n$
is the identity on the first factor and multiplication by $k$ on the second
factor.  The effect of $f_k$ on T-duality data $(E\xrightarrow{p} Z, h)$
is to keep the bundle $p$
the same and to multiply the H-flux $h$ by $k$.  On the T-dual data, $f_k$
multiplies the characteristic class $c(p\dual)$ by $k$.
\label{lem:endR}
\end{lemma}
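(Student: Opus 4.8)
The plan is to build $f_k$ by hand from the homotopy-pullback description of $\tR$ and then to prove uniqueness by a short spectral-sequence computation. Write $B = K(\bZ^n,2)\times K(\bZ^n,2)$, let $x_j,y_j\in H^2(B)$ be the canonical classes as in \eqref{eq:classsp}, and let $\kappa\co B\to K(\bZ,4)$ classify the $k$-invariant $\sum_j x_jy_j$, so that $\tR$ is, up to homotopy, the homotopy pullback of $\kappa$ along the based path--loop fibration $PK(\bZ,4)\to K(\bZ,4)$, with fiber $\Omega K(\bZ,4)\simeq K(\bZ,3)$ and projection $q\co\tR\to B$. For an abelian group $A$ write $\mu_k\co K(A,m)\to K(A,m)$ for a based self-map inducing multiplication by $k$, and set $g_k = \id\times\mu_k\co B\to B$. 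Then $g_k^*\bigl(\sum_j x_jy_j\bigr) = \sum_j x_j(ky_j) = k\sum_j x_jy_j$, which is also what $\mu_k\circ\kappa$ classifies, so the square with top row $g_k$, bottom row $\mu_k$, and vertical maps $\kappa$ commutes up to a chosen homotopy. Feeding $g_k\circ q\co\tR\to B$, the composite of the canonical path map $\tR\to PK(\bZ,4)$ with $P\mu_k\co PK(\bZ,4)\to PK(\bZ,4)$, and this homotopy into the universal property of the homotopy pullback produces a map $f_k\co\tR\to\tR$ lying over $g_k$. On the fiber, $f_k$ restricts to $\Omega\mu_k$, which is multiplication by $k$ on $\pi_3$; since the homotopy exact sequence of \eqref{eq:classsp} gives $\pi_3(\tR)\cong\pi_3(K(\bZ,3))$ and $\pi_2(\tR)\cong\pi_2(B)$, the map $f_k$ has exactly the asserted effect on $\pi_2$ and $\pi_3$.

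For uniqueness, suppose $f'\co\tR\to\tR$ has the prescribed effect on $\pi_2$ and $\pi_3$. Since $B$ is a product of copies of $K(\bZ,2)$, the map $q\circ f'$ is determined up to homotopy by its action on $H^2$, whence $q\circ f'\simeq g_k\circ q$; thus $f'$ is a lift of $g_k\circ q$ through $q$. The set of homotopy classes of such lifts is nonempty (by the construction of $f_k$) and is acted on transitively by $[\tR, K(\bZ,3)] = H^3(\tR;\bZ)$. A short computation with the Serre spectral sequence of $K(\bZ,3)\to\tR\xrightarrow{q} B$ shows this group vanishes: total degree $3$ can only be fed by $E_\infty^{3,0}$, a subquotient of $H^3(B;\bZ)=0$, and by $E_\infty^{0,3}=\ker\bigl(d_4\co E_4^{0,3}=\bZ\to E_4^{4,0}=H^4(B;\bZ)\bigr)$, where $d_4$ sends the fiber fundamental class to the transgressed $k$-invariant $\sum_j x_jy_j$; as this is a nonzero element of the torsion-free group $H^4(B;\bZ)$, the differential $d_4$ is injective, so $E_\infty^{0,3}=0$ and $H^3(\tR;\bZ)=0$. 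Hence the lift of $g_k\circ q$ is unique up to homotopy, so $f'\simeq f_k$.

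It remains to translate this into the language of T-duality data, using the normalizations of \cite{MR2222224} recalled in Remark \ref{rem:BN}. For $Z$ simply connected a set of T-duality data over $Z$ is a map $u\co Z\to\tR$: the composite $q\circ u$ records the characteristic class $c(p)$ of the underlying torus bundle $E\xrightarrow{p} Z$ in the first $K(\bZ^n,2)$ factor and the class $c(p\dual)$ (equivalently the Gysin push-forward of the H-flux) in the second, while the lift to $\tR$ records the remaining data of the H-flux $h$ through the fiber $K(\bZ,3)$. Since $f_k$ is the identity on the first factor, it leaves $c(p)$, hence the isomorphism class of $p$, unchanged; being $\mu_k$ on the second factor and on the fiber, it multiplies both $c(p\dual)$ and $h$ by $k$. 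This is the stated effect on the original data. Passing to the T-dual is implemented by composing with the T-duality self-equivalence $\tau$ of $\tR$ that interchanges the two $K(\bZ^n,2)$ factors (such $\tau$ exists because $\sum_j x_jy_j$ is symmetric under $x_j\leftrightarrow y_j$); then $\tau\circ f_k\simeq f_k'\circ\tau$, where $f_k'$ is multiplication by $k$ on the first factor and the identity on the second. Hence on the T-dual data $\tau\circ u$ the operation multiplies the first-factor class — which is now $c(p\dual)$ — by $k$, as claimed.

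The main obstacle is not any single hard calculation but twofold bookkeeping. First, one must verify that the \emph{unique} lift $f_k$ actually realizes multiplication by $k$ — rather than by some other integer — on $\pi_3(\tR)$; this is exactly why the argument goes through the explicit homotopy-pullback model (and the identification of $f_k$ on the fiber with $\Omega\mu_k$) instead of a bare obstruction-theoretic existence statement. Second, one must carefully match the homotopy-theoretic maps on $\tR$ with the geometric quantities $c(p)$, $c(p\dual)$, and $h$ as normalized in \cite{MR2222224}. Everything else — the exact homotopy sequence of \eqref{eq:classsp}, the vanishing of $H^3(\tR;\bZ)$, and the existence of $\tau$ — is routine.
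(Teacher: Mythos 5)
Your proof is correct and is essentially the argument the paper has in mind: the paper's proof is the one-line remark that this is an exercise in obstruction theory resting on the compatibility of the prescribed maps on homotopy groups with the $k$-invariant $\sum_j x_jy_j$, and your construction of $f_k$ from $g_k=\id\times\mu_k$ (using $g_k^*(\sum_j x_jy_j)=k\sum_j x_jy_j$) together with the uniqueness via $H^3(\tR)=0$ is exactly that exercise carried out in detail. No substantive differences to report.
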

\begin{proof}  
This is a simple exercise in obstruction theory, based on the
observation that the given maps on homotopy groups are compatible
with the $k$-invariant of the bundle \textup{\eqref{eq:classsp}}.
\end{proof}  
Theorem \ref{thm:Tdualclasses}
in a sense  is disappointing, in that we get no nontrivial
isomorphisms of twisted $K$-groups, but it completely settles
the nature of the T-dualities coming from Langlands duality.

\subsection{Twisted $K$-theory of compact simple Lie groups}
\label{sec:twistedK}

We now move on to the question of how to compute the twisted
$K$-groups $K^*(G,h)$ more generally, where $G$ is a connected compact
simple Lie group and $h\in H^3(G)\cong \bZ$.  (Strictly speaking,
a twist for $K$-theory is not exactly the same as a class in $H^3$,
but the difference won't matter for our purposes since we just want
to compute the groups as abstract groups.)  Methods for
computing $K^*(G,h)$ were discussed in \cite{MR1877986,MR2079376}
in a few cases, and the calculation for simply connected
$G$ was done completely in \cite{MR2061550,MR2263220}.
However, the methods used there do not work when $G$ is not simply
connected, at least not without a lot of additional work, so
other techniques are needed.  We illustrate another method
using the Segal spectral sequence (from 
\cite[Proposition 5.2]{MR0232393}).  We begin with an abstract
result, which could be made even more general, though
the case given is sufficient for our purposes

\begin{theorem}
Let $F\xrightarrow{\iota} E\to B$ be a fiber bundle, say of
compact metrizable spaces with finite
homotopy type,
and let $h\in H^3(E)$.  Then there is a spectral
sequence $H^p(B, K^q(F, \iota^* h)) \Rightarrow
K^\bullet(E, h)$.
\label{thm:SegalSS}
\end{theorem}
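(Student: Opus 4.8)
The plan is to construct the spectral sequence as a twisted analogue of the Leray--Serre spectral sequence, using the fact that twisted $K$-theory is a cohomology theory represented by a bundle of spectra. Concretely, let $\cK$ denote the spectrum of complex topological $K$-theory, which carries an action of the group-like monoid $B\text{Pic}(\cK)$ classifying twists; the class $h\in H^3(E)$, regarded (up to the harmless ambiguity mentioned in the text) as a map $E\to K(\bZ,3)\to B\mathrm{GL}_1(\cK)$, determines a bundle of spectra $\cK_h$ over $E$ whose sections compute $K^\bullet(E,h)$. Since $p\co E\to B$ is a fiber bundle with fiber $F$, the first step is to observe that pushing $\cK_h$ forward along $p$ — i.e. forming the parametrized spectrum $p_*\cK_h$ over $B$ whose fiber over $b\in B$ is the function spectrum on $F_b$ with coefficients in $\cK_{h|_{F_b}}$, i.e. $\iota^*h$ restricted to that fiber — gives a parametrized spectrum over $B$ whose global sections again compute $K^\bullet(E,h)$. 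This is where the hypothesis that everything has finite homotopy type and is compact metrizable is used, so that the fibers are finite complexes up to homotopy and the push-forward is well behaved.

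Next I would run the Atiyah--Hirzebruch/Leray spectral sequence of the parametrized spectrum $p_*\cK_h$ over $B$. This is the spectral sequence associated to the skeletal (or Postnikov) filtration of $B$, or equivalently a CW-filtration; since $B$ has finite homotopy type one may take a finite CW model. Its $E_2$-page is $H^p(B;\,\underline{\pi_{-q}(p_*\cK_h)})$, the cohomology of $B$ with coefficients in the local system given by the homotopy groups of the fibers of $p_*\cK_h$. By construction $\pi_{-q}$ of the fiber over $b$ is exactly $K^q(F_b,(\iota^*h)|_{F_b})$, and as $b$ varies these assemble into the local system $K^q(F,\iota^*h)$ referred to in the statement — the monodromy being that induced by the structure group of the bundle $p$ acting on $F$ and compatibly on the restricted twist. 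Thus $E_2^{p,q}=H^p(B;K^q(F,\iota^*h))$, and the spectral sequence converges to $\pi_{-\bullet}$ of the global sections of $p_*\cK_h$, which is $K^\bullet(E,h)$. Convergence is guaranteed because the filtration is finite (finite-dimensional $B$) and the groups involved are finitely generated.

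An alternative, more classical route — closer to the Segal reference cited — is to model the twist by a bundle of $C^*$-algebras, i.e. a stable continuous-trace algebra $A$ on $E$ with Dixmier--Douady class $h$, and then to regard $p_*A$ as a continuous field of $C^*$-algebras over $B$ whose fiber over $b$ is the section algebra of $A|_{F_b}$. One then applies the Segal spectral sequence of \cite[Proposition 5.2]{MR0232393} for a continuous field of Kasparov modules / the $K$-theory of a fibered $C^*$-algebra over $B$, whose $E_2$-term is $H^p(B; K_{-q}(\text{fiber}))$ and which converges to $K_{-\bullet}(p_*A)=K_{-\bullet}(A)=K^\bullet(E,h)$. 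Either way the bookkeeping is the same; I would present whichever is cleaner given the conventions already fixed in the paper.

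The main obstacle is making the local coefficient system precise and checking that the $E_2$-identification is natural: one must verify that the homotopy groups of the fibers of the parametrized spectrum (equivalently, the $K$-theory of the fiber $C^*$-algebras) form a genuine local system on $B$ with the stated monodromy, i.e. that the twist $\iota^*h$ varies continuously over the fibers in a way compatible with the bundle structure. This is essentially automatic when $p$ is a genuine fiber bundle with structure group a topological group acting on $F$, since then the restricted twist is pulled back along the evaluation and the action is through that group; but it does require the twist to be represented so that its restriction to each fiber makes sense coherently — which is exactly why one restricts to fiber bundles (rather than arbitrary fibrations) and to spaces of finite homotopy type. Once that point is settled, the construction and convergence are formal consequences of the theory of parametrized spectra (or of Segal's spectral sequence in the operator-algebraic model).
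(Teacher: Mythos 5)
Your proposal is correct and matches the paper's argument: the primary route (filter $B$ by skeleta and take the induced filtration of $K^\bullet(E,h)$, i.e.\ Segal's spectral sequence applied to the fiberwise-twisted $K$-theory, here packaged in parametrized-spectrum language) and your alternative via a continuous-trace $C^*$-algebra pushed forward to a bundle of $C^*$-algebras over $B$ are both exactly the two arguments given in the paper. Your remark that the $E_2$-coefficients are really a local system is a fair refinement that the paper leaves implicit.
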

\begin{proof}
In the absence of the twist, this is precisely the
spectral sequence of \cite[Proposition 5.2]{MR0232393} in the
case where the cohomology theory used is complex $K$-theory,
and if $E=B$ and $F=\pt$, it reduces to the usual AHSS.  Similarly,
if $E=B$ and $F=\pt$, but $h\ne 0$, this is the AHSS for twisted
$K$-theory.  To get the general case, we can assume (by
homotopy invariance) that $B$ is a finite CW-complex, and filter
$B$ by its skeleta.  This induces a filtration of $K^\bullet(E, h)$
for which this is the induced spectral sequence (by Segal's proof).

There is also another way to see this in terms of continuous-trace
algebras.  $K^\bullet(E, h)$ is the $K$-theory of a continuous-trace
$C^*$-algebra $A$ with spectrum $B$ and Dixmier-Douady class
$h$.  Using the bundle projection, we can write $A$ as the
algebra of sections of a bundle of $C^*$-algebras over $B$,
where the fiber algebra over a point $b\in B$ is a continuous-trace
algebra with spectrum $E_b\cong F$ and the Dixmier-Douady class
is the pull-back of $h$ to this fiber.  Projections in the fiber
algebras locally extend to a neighborhood of the fiber, and the
differentials of the spectral sequence measure the obstructions
to extending them over the inverse images of bigger and
bigger skeleta of $B$.
\end{proof}

Next let us try to explain the somewhat puzzling results of
\cite{MR2061550} and \cite{MR2263220}, which, to pick just the simplest
case, say that $K^\bullet(\SU(n+1), h)$ is isomorphic to
$\bZ/h'$ tensored with an exterior algebra on $n-1$ generators,
where
\begin{equation}
\label{eq:BraunDouglas}
h' = \frac{h}{\gcd(h, \lcm(1,2,\cdots,n))}.
\end{equation}
The method of proof of this result was quite indirect ---
\cite{MR2061550}  used the Hodgkin K\"unneth spectral sequence
in equivariant $K$-theory together with the calculations
of Freed-Hopkins-Teleman \cite{MR1829086,MR2365650}\footnote{There
is indirect physics input here since Freed-Hopkins-Teleman showed
that the \emph{equivariant} twisted $K$-theory is the same
as the Verlinde ring of the associated WZW model.},
while \cite{MR2263220} used a Rothenberg-Steenrod spectral sequence
and $K$-theory of loop spaces.

Theorem \ref{thm:SegalSS} suggests a potentially much simpler
method for doing the calculations inductively, which we will consider
here for $\SU(n+1)$ and $\PSU(n+1)$, though it could be extended
to other classical groups as well. Let 
$G=\PSU(n+1)$ and $\tG=\SU(n+1)$.  The transitive action of $\tG=\SU(n+1)$ on
the unit sphere $S^{2n+1}$ in $\bC^{n+1}$ descends to a transitive action of $G$
on $L^{2n+1}(n+1)$, the lens space obtained by dividing $S^{2n+1}$ by the action
of $\mu_{n+1}$, the $(n+1)$-th roots of unity.  We thus get a
commuting diagram of fibrations
\begin{equation}
\xymatrix{\SU(n)\ar[r] \ar@{=}@<1ex>[d]& \SU(n+1) \ar[r] \ar@{->>}[d]^\pi
  & S^{2n+1}\ar@{->>}[d]\\
  \SU(n)\ar[r] & \PSU(n+1)  \ar[r]&  \,L^{2n+1}(n+1).}
\label{eq:fibs}
\end{equation}
The $\SU(n)$ at the bottom is really the image in $\PSU(n+1)$ of
matrices of the form $\begin{pmatrix} \zeta & 0\\ 0& \zeta A\end{pmatrix}$,
$A\in \SU(n)$ and $\zeta \in \mu_{n+1}$, but this can be identified with
$\SU(n)$. The diagram \eqref{eq:fibs} gives rise to a morphism
of Serre spectral sequences 
\[
\xymatrix{H^p(S^{2n+1}, H^q(\SU(n))) \ar@{=>}[r]  &
H^{p+q}(\SU(n+1))  \\
H^p(L^{2n+1}(n+1), H^q(\SU(n)))  \ar@{=>}[r] \ar[u]_{\pi^*} &
H^{p+q}(\PSU(n+1)).\ar[u]_{\pi^*} }
\]
Here $H^\bullet(\SU(n))$ is an exterior algebra
on generators $x_3,x_5,\cdots,x_{2n-1}$ (with the subscript indicating the
degree), so the first sequence collapses just for dimensional reasons.
The second sequence looks as if it could have a nontrivial differential
$d_4\co x_3\mapsto cy^2$, where $y\in H^2(L^{2n+1}(n+1))\cong \bZ/(n+1)$ is the
usual generator. However, the calculations of $H^\bullet(\PSU(n+1), \bF_p)$
by Borel in \cite[Th\'eor\`eme 11.4]{MR0064056} and by Baum and Browder in
\cite{MR0189063} (which already appeared in the proof of Theorem
\ref{thm:maponH3})
show that in many cases (e.g., $n+1$ odd), $d_4(x_3)=0$ in the spectral
sequence for the lower fibration in \eqref{eq:fibs}.  Then
the restriction maps $H^3(G)\to H^3(\SU(n))$ and
$H^3(\tG)\to H^3(\SU(n))$ are both isomorphisms.

Now apply Theorem \ref{thm:SegalSS} to the fibrations of
\eqref{eq:fibs}.  We get spectral sequences
\[
H^p(S^{2n+1}, K^q(\SU(n),h)) \Rightarrow K^{p+q}(\SU(n+1), h)
\]
and (say when $n+1$ is odd, so we can still identify
$H^3(\PSU(n+1))$ with $H^3(\SU(n))$)
\[
H^p(L^{2n+1}(n+1), K^q(\SU(n),h)) \Rightarrow K^{p+q}(\PSU(n+1), h).
\]

Assuming we've inductively proved \eqref{eq:BraunDouglas}
for smaller values of $n$, this now gives information about
$K^\bullet(\SU(n+1),h)$ or $K^\bullet(\PSU(n+1),h)$.
We will show how this can be used to get information about
these with relatively elementary methods, not involving the
techniques used by Braun and Douglas.

For example, while the following is weaker than the results
of \cite{MR2079376,MR1877986} (for $\SU(3)$) and
of \cite{MR2061550,MR2263220}
(for $\SU(n+1)$ in general), it does give some
highly nontrivial information, and the proof is relatively elementary.
\begin{theorem}
\label{thm:twistedKofSU}
Suppose $h\in \bZ\cong H^3(\SU(n+1))$ is relatively prime
to $n!$. Then $K^\bullet(\SU(n+1),h)$ is $\bZ/h$ tensored with
an exterior algebra on $n-1$ odd generators.
\end{theorem}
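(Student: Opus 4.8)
The plan is to compute $K^\bullet(\SU(n+1),h)$ by induction on $n$ using the Segal spectral sequence of Theorem~\ref{thm:SegalSS} applied to the fibration $\SU(n)\to\SU(n+1)\to S^{2n+1}$ (the top row of \eqref{eq:fibs}), so the relevant input is
\[
H^p\bigl(S^{2n+1},K^q(\SU(n),\iota^*h)\bigr)\ \Longrightarrow\ K^{p+q}(\SU(n+1),h).
\]
First I would check the base case: for $n=1$, $\SU(2)=S^3$ and the only constraint is that $h$ be prime to $1!=1$, which is vacuous; the twisted $K$-theory of $S^3$ with twist $h$ is $\bZ/h$ in odd degree and $0$ in even degree (the AHSS computation of Example~\ref{ex:SU2}), which is $\bZ/h$ tensored with an exterior algebra on $0$ generators. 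For the inductive step I would note that $\iota^*h\in H^3(\SU(n))\cong\bZ$ is again a generator-multiple; since the restriction $H^3(\SU(n+1))\to H^3(\SU(n))$ is an isomorphism (the Serre spectral sequence for the top fibration collapses, as observed in the text just before the theorem), $\iota^*h$ is the \emph{same} integer $h$, and $h$ is prime to $n!$, hence \emph{a fortiori} prime to $(n-1)!$. So the inductive hypothesis gives $K^\bullet(\SU(n),h)\cong \bZ/h\otimes\bigwedge(n-2\text{ odd generators})$, which as a graded group is a free $\bZ/h$-module with half its rank in even degree and half in odd degree (for $n\ge 2$).

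Next I would feed this into the Segal spectral sequence. Since the base is $S^{2n+1}$, the $E_2$-page has only two nonzero columns, $p=0$ and $p=2n+1$, and the only possible differential is $d_{2n+1}\co E_{2n+1}^{0,q}\to E_{2n+1}^{2n+1,q-2n}$, a map between two copies of $K^q(\SU(n),h)$ (up to the degree shift $2n+1$, which is odd, so it swaps the even and odd parts). The whole theorem then comes down to identifying this differential. I would argue it must vanish: the class $h\in H^3(\SU(n+1))$ restricts to the \emph{generator}-multiple $h$ on the fiber, and — this is the crucial point — the differential $d_{2n+1}$ in the Segal spectral sequence for a twisted theory over a sphere, like the transgression in an ordinary Serre spectral sequence, is controlled by the attaching data of $S^{2n+1}$ together with the $K$-theory operation induced by the twist; since $S^{2n+1}$ is formal and the fibration $\SU(n)\to\SU(n+1)\to S^{2n+1}$ is totally non-homologous to zero (the Serre spectral sequence for ordinary cohomology collapses, so all of $H^\bullet(\SU(n))$ transgresses trivially), there is no room for $d_{2n+1}$ to be nonzero on the twisted theory either. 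More concretely, I would compare with the untwisted case: when $h=0$ the spectral sequence collapses by dimensional reasons (as noted in the text), and the twisted differential $d_{2n+1}$ differs from the untwisted one by terms involving cup product with (the image of) $h$; but $H^\bullet(S^{2n+1})$ has nothing in the degrees where such correction terms could land other than the top class, and a direct check on the module generators — using that $K^\bullet(\SU(n),h)$ is killed by $h$ — shows the correction vanishes.

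Granting $d_{2n+1}=0$, the spectral sequence collapses at $E_2$ and, since $H^\bullet(S^{2n+1})$ is free, there are no extension problems, so
\[
K^\bullet(\SU(n+1),h)\ \cong\ K^\bullet(\SU(n),h)\otimes_{\bZ} H^\bullet(S^{2n+1})\ \cong\ K^\bullet(\SU(n),h)\otimes_{\bZ}\bigwedge(x_{2n+1}),
\]
which by induction is $\bZ/h$ tensored with an exterior algebra on $(n-2)+1=n-1$ odd generators, completing the induction. The main obstacle, and the step I would spend the most care on, is the vanishing of $d_{2n+1}$: I need to make precise the claim that the twist $h$, which lives in $H^3$ and restricts to the fiber, cannot create a nonzero differential on the top cell of $S^{2n+1}$. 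The cleanest route is probably to invoke naturality of the Segal spectral sequence together with the splitting $H^3(\SU(n+1))\xrightarrow{\sim}H^3(\SU(n))$: the twist $h$ is pulled back, in an appropriate sense, along the fiber inclusion, and a pulled-back twist contributes nothing new to the differential $d_{2n+1}$ whose source and target are the fiber theory. Alternatively, one can observe that $K^\bullet(\SU(n),h)$ is a module over $K^\bullet(\SU(n))$ and $d_{2n+1}$ is $K^\bullet(\SU(n))$-linear, while the untwisted $d_{2n+1}$ vanishes, which pins down the twisted one up to a multiple of $h$ that then has to act as $0$ on the $h$-torsion target; either way the coprimality of $h$ with $n!$ never actually enters the collapse argument — it only enters through the inductive hypothesis — which is consistent with the shape of the statement.
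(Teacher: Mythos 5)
Your overall strategy --- induction on $n$ via the Segal spectral sequence for $\SU(n)\to\SU(n+1)\to S^{2n+1}$, with base case $\SU(2)$ and the whole matter reduced to the vanishing of the single differential $d_{2n+1}$ --- is exactly the paper's. But your argument for $d_{2n+1}=0$ has a genuine gap, and the tell-tale symptom is your own closing remark that ``the coprimality of $h$ with $n!$ never actually enters the collapse argument.'' If that were true, the theorem would hold for every $h$, which contradicts the known answer \eqref{eq:BraunDouglas}: for instance $K^\bullet(\SU(3),h)\cong\bZ/(h/\gcd(h,2))\otimes\bigwedge(x_1)$, so for $h$ even the differential $d_5$ in this very spectral sequence \emph{must} be nonzero (the $E_2$-page has order $h^2$ in each parity while the abutment has order $(h/2)^2$). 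The collapse of the ordinary-cohomology Serre spectral sequence (which holds for trivial degree reasons) and the collapse of the untwisted $K$-theory spectral sequence (forced by $K^\bullet(\SU(n+1))$ being free of the right rank) do not transfer to torsion coefficients: the fibration is \emph{not} totally non-homologous to zero in any sense strong enough to kill $d_{2n+1}$ on $\bZ/h$-module coefficients, because the fibration itself is nontrivial. Likewise, ``pulled-back twist contributes nothing'' and ``$K^\bullet(\SU(n))$-linearity pins it down up to a multiple of $h$'' do not follow; the differential is controlled by the clutching data of the bundle, not by the twist alone.

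The missing ingredient is precisely where $n!$ comes from. The bundle $\SU(n)\to\SU(n+1)\to S^{2n+1}$ is classified by an element of $\pi_{2n+1}(B\SU(n))\cong\pi_{2n}(\SU(n))\cong\bZ/n!$ (Bott), and the exact sequence
\[
\bZ\cong\pi_{2n+1}(\SU(n+1))\to\pi_{2n+1}(S^{2n+1})\to\pi_{2n}(\SU(n))\cong\bZ/n!\to\pi_{2n}(\SU(n+1))=0
\]
shows the fibration splits after inverting $n!$. Since your coefficient groups $K^q(\SU(n),h)$ are $h$-torsion with $\gcd(h,n!)=1$, the spectral sequence is unchanged by inverting $n!$, after which the fibration is a product and $d_{2n+1}$ vanishes. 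With that substitution your induction goes through as written; the coprimality hypothesis is doing real work in the collapse step, not just in the inductive hypothesis.
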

\begin{proof}
We proceed by induction on $n$.  The case $n=1$ is covered
by Example \ref{ex:SU2}, where we computed that
$K^\bullet(\SU(2),h)\cong \bZ/h$ in odd degree.  So assume
that $n>1$ and that
the result holds for smaller values of $n$, and use the Segal 
spectral sequence (Theorem \ref{thm:SegalSS}) associated to the
fibration
\begin{equation}
\SU(n)\to \SU(n+1) \to S^{2n+1}.
\label{eq:SUfib}
\end{equation}
Since $n>1$, it is clear from the fibration that
the restriction map $H^3(\SU(n+1))\to H^3(\SU(n))$ is an isomorphism.
Also, if $h$ is relatively prime to $n!$, then it is certainly
relatively prime to $(n-1)!$.  So we get a Segal spectral
sequence 
\[
H^p(S^{2n+1}, K^q(\SU(n), h)) \Rightarrow
K^{p+q}(\SU(n+1), h),
\]
and by inductive hypothesis, $K^q(\SU(n), h)
\cong (\bZ/h) \otimes \bigwedge(y_1,\cdots, y_{n-2})$,
with the $\bZ/h$ and the $y$'s all in odd degree.
We just need to show that the spectral sequence
collapses.  There is only one possible differential,
$d_{2n+1}$, related to the homotopical nontriviality
of the fibration.  However, the long exact homotopy 
sequence of the fibration includes the following:
\begin{equation}
\label{eq:homotopyseq}
\pi_{2n+1}(\SU(n+1))\to \pi_{2n+1}(S^{2n+1})
\to \pi_{2n}(\SU(n)) \to \pi_{2n}(\SU(n+1)).
\end{equation}
By \cite[\S8, Theorem 5]{MR0102803},
$\pi_{2n+1}(\SU(n+1))\cong \bZ$,
$\pi_{2n}(\SU(n+1))=0$, and $\pi_{2n}(\SU(n)) \cong
\bZ/n!$.  So substituting back into \eqref{eq:homotopyseq}
we get
\[
\cdots \to \bZ \to \bZ \to \bZ/n! \to 0.
\]
That means that after inverting $n!$ (but not without
doing this), the map
\[
\pi_{2n+1}(\SU(n+1))\to \pi_{2n+1}(S^{2n+1})
\]
splits, i.e., the fibration \eqref{eq:SUfib} splits. (A splitting of
the generator of $\pi_{2n+1}(S^{2n+1})$ would be a map
$S^{2n+1}\to \SU(n+1)$ splitting the fibration.) 
Or equivalently, the class of the bundle \eqref{eq:SUfib} 
is given by a class in 
\[
[S^{2n+1}, B\SU(n)] =
\pi_{2n+1}(B\SU(n))\cong \pi_{2n}(\SU(n))\cong \bZ/n!.
\]
So when $\gcd(h, n!)=1$,
$d_{2n+1}$ vanishes and the conclusion follows.
\end{proof}
\begin{remark}
\label{rem:whydenom}
  Note that Theorem \ref{thm:twistedKofSU}, and especially
  the appeal to Bott's theorem \cite[\S8, Theorem 5]{MR0102803},
  ``explains'' the strange denominator in formula \eqref{eq:BraunDouglas},
  at least to some extent.  We can also handle the
  symplectic groups the same way.
\end{remark}
\begin{theorem}
\label{thm:twistedKofSp}
Suppose $h\in \bZ\cong H^3(\Sp(n))$ is relatively prime
to $(2n+1)!$. Then $K^\bullet(\Sp(n),h))$ is $\bZ/h$ tensored with
an exterior algebra on $n-1$ odd generators.
\end{theorem}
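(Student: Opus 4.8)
The plan is to mimic the proof of Theorem \ref{thm:twistedKofSU} almost verbatim, replacing the fibration $\SU(n)\to\SU(n+1)\to S^{2n+1}$ with the analogous sphere bundle for the symplectic groups, namely
\[
\Sp(n-1)\to \Sp(n)\to S^{4n-1}.
\]
First I would set up the induction on $n$: the base case $n=1$ is $\Sp(1)=\SU(2)$, already handled in Example \ref{ex:SU2}, where $K^\bullet(\SU(2),h)\cong\bZ/h$ in odd degree. For the inductive step, since $n\ge 2$ the fibration shows the restriction $H^3(\Sp(n))\to H^3(\Sp(n-1))$ is an isomorphism (the next nonzero cohomology of $S^{4n-1}$ sits in degree $4n-1\ge 7$, well above degree $3$), so $\iota^*h$ is again the generator times $h$. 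If $h$ is relatively prime to $(2n+1)!$ then it is a fortiori relatively prime to $(2(n-1)+1)!=(2n-1)!$, so the inductive hypothesis applies and gives $K^q(\Sp(n-1),h)\cong(\bZ/h)\otimes\bigwedge(y_1,\dots,y_{n-2})$ with everything in odd degree.

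Next I would apply the Segal spectral sequence of Theorem \ref{thm:SegalSS} to this bundle:
\[
H^p\bigl(S^{4n-1},K^q(\Sp(n-1),h)\bigr)\Rightarrow K^{p+q}(\Sp(n),h).
\]
Because the base is a single sphere there is exactly one potentially nonzero differential, $d_{4n-1}$, and as in the unitary case it is governed by the homotopical nontriviality of the sphere bundle, i.e.\ by the classifying class in $[S^{4n-1},B\Sp(n-1)]\cong\pi_{4n-1}(B\Sp(n-1))\cong\pi_{4n-2}(\Sp(n-1))$. The key input is Bott's computation of $\pi_{4n-2}(\Sp(n-1))$; from the long exact homotopy sequence
\[
\pi_{4n-1}(\Sp(n))\to\pi_{4n-1}(S^{4n-1})\to\pi_{4n-2}(\Sp(n-1))\to\pi_{4n-2}(\Sp(n)),
\]
one reads off that the generator of $\pi_{4n-1}(S^{4n-1})\cong\bZ$ maps into a finite group, in fact a cyclic group whose order divides $(2n+1)!$ (Bott's stable/metastable results for $\Sp$, cf.\ \cite[\S8]{MR0102803}). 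Hence once we invert $(2n+1)!$ the bundle $\Sp(n-1)\to\Sp(n)\to S^{4n-1}$ splits, so when $\gcd(h,(2n+1)!)=1$ the class in the $E_{4n-1}$ page that $d_{4n-1}$ could hit is already killed by multiplication by a unit, forcing $d_{4n-1}=0$ on all entries. Therefore the spectral sequence collapses, and $K^\bullet(\Sp(n),h)\cong(\bZ/h)\otimes\bigwedge(y_1,\dots,y_{n-2})\otimes\bigwedge(x_{4n-1})$ as an abstract group, i.e.\ $\bZ/h$ tensored with an exterior algebra on $n-1$ odd generators, completing the induction.

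The main obstacle is pinning down the precise exponent needed: one must verify that the order of the image of $\pi_{4n-1}(S^{4n-1})$ in $\pi_{4n-2}(\Sp(n-1))$ — equivalently the order of the classifying element of the tangent-type bundle — really does divide $(2n+1)!$, rather than some other factorial-type bound. This is exactly the symplectic analogue of the $\pi_{2n}(\SU(n))\cong\bZ/n!$ computation used in Theorem \ref{thm:twistedKofSU}, and the relevant group is $\pi_{4n-2}(\Sp(n-1))$ in the metastable range where Bott's and Harris's computations apply; the $(2n+1)!$ denominator here plays the role that $n!$ played for $\SU(n+1)$. Once that number-theoretic bound is in hand, the collapse argument and the extension-problem bookkeeping (the $E_\infty$ page has $\bZ/h$'s stacked only in odd total degree, so there are no nontrivial extensions to resolve) are routine, exactly as in the proof of Theorem \ref{thm:twistedKofSU}.
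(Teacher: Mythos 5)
Your proposal is correct and follows essentially the same route as the paper: the same fibration $\Sp(n-1)\to\Sp(n)\to S^{4n-1}$, the same Segal spectral sequence with the single possible differential $d_{4n-1}$ controlled by the boundary map onto $\pi_{4n-2}(\Sp(n-1))$, and the same splitting-after-inverting-primes argument. The only refinement is that $\pi_{4n-2}(\Sp(n-1))$ is in fact $\bZ/(2n+1)!$ or $\bZ/(2\cdot(2n+1)!)$ according to the parity of $n$, so its order need not divide $(2n+1)!$ as you assert; but since $(2n+1)!$ is already even, the set of primes to invert is unchanged and your conclusion goes through exactly as in the paper.
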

\begin{proof}
This proceeds exactly the same way as Theorem
\ref{thm:twistedKofSU}, using the homotopy sequence of the fibration
\[
\Sp(n-1) \to \Sp(n) \to S^{4n-1}.
\]
By Bott periodicity and stability, $\pi_{4n-1}(\Sp(n))\cong \bZ$
and $\pi_{4n-2}(\Sp(n))=0$.  The group $\pi_{4n-2}(\Sp(n-1))$
is not in the stable range since $4n-2 = 4(n-1) + 2$, but it is computed
in \cite{MR0143216} and \cite{MR0169243}, and turns out to be
$\bZ/(2n+1)!$ or $\bZ/(2(2n+1)!)$, depending on whether
$n$ is odd or even.  Thus we get the long exact homotopy sequence
\[
\left(\bZ\cong \pi_{4n-1}(\Sp(n)) \right) \to \pi_{4n-1}(S^{4n-1})
\to \left( \bZ/(2n+1)!\text{ or }\bZ/(2(2n+1)!) \right) \to 0,
\]
and so the fibration splits after inverting primes $\le 2n+1$.
The rest of the proof is as before.
\end{proof}
Results for other groups can be proved with similar methods.
For example, we have:
\begin{theorem}
\label{thm:twistedKofG2}
Suppose $h\in \bZ\cong H^3(G_2)$ is relatively prime
to $2$, $3$, and $5$. Then $K^\bullet(G_2,h)$ is $\bZ/h$ in
both even and odd degree.
\end{theorem}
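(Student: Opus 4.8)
The strategy is to mimic the arguments of Theorems \ref{thm:twistedKofSU} and \ref{thm:twistedKofSp}, using the Segal spectral sequence of Theorem \ref{thm:SegalSS} for a suitable fibration realizing $G_2$ as a bundle over a sphere with a known fiber. The natural candidate is the fibration
\[
\SU(3)\to G_2\to S^6,
\]
coming from the action of $G_2$ on the unit sphere in the imaginary octonions, whose point stabilizer is $\SU(3)$. First I would note that since $G_2$ is $2$-connected and $H^3(G_2)\cong\bZ$, the restriction map $H^3(G_2)\to H^3(\SU(3))$ is an isomorphism (the fiber inclusion is $5$-connected, as $S^6$ is $5$-connected). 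Next I would observe that $\SU(3)$ has cohomology $\bigwedge(x_3,x_5)$, and by the $\SU(3)$ case of Theorem \ref{thm:twistedKofSU} (the case $n=2$), if $h$ is prime to $2!=2$ then $K^\bullet(\SU(3),h)\cong \bZ/h$ concentrated in odd degree, with no exterior generators. Hence the $E_2$-page of the Segal spectral sequence
\[
H^p\bigl(S^6, K^q(\SU(3),h)\bigr)\Rightarrow K^{p+q}(G_2,h)
\]
has $\bZ/h$ in bidegrees $(0,\text{odd})$ and $(6,\text{odd})$ and is zero elsewhere on the $E_2$-page in the relevant columns.

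The only possible differential is $d_6$, governed by the homotopical nontriviality of the bundle, i.e., by the clutching element in $\pi_6(B\SU(3))\cong\pi_5(\SU(3))$. So the key computation is: after inverting the primes dividing $h$, does this clutching class vanish? Equivalently, I would examine the long exact homotopy sequence of the fibration,
\[
\pi_6(G_2)\to \pi_6(S^6)\to \pi_5(\SU(3))\to \pi_5(G_2),
\]
and invoke the classical computations $\pi_5(\SU(3))\cong\bZ$, $\pi_6(G_2)\cong\bZ_3$ (or more precisely the relevant unstable homotopy groups of $G_2$, from Mimura--Toda), and $\pi_5(G_2)=0$ together with $\pi_6(S^6)\cong\bZ$. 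These force the map $\pi_6(S^6)\to\pi_5(\SU(3))$ to be injective with cokernel a finite group whose order involves only the primes $2,3,5$. It follows that after inverting $2$, $3$, $5$ the bundle $\SU(3)\to G_2\to S^6$ is stably split, so the clutching class becomes torsion prime to $h$ and $d_6=0$ when $h$ is prime to $2,3,5$. With the spectral sequence collapsing and no extension problems (all nonzero groups on $E_\infty$ lie in a single total degree mod $2$ in each of even/odd, up to Bott periodicity, and a short exact sequence $0\to\bZ/h\to ?\to\bZ/h\to 0$ could in principle be nontrivial --- this is the one point needing care), one reads off $K^\bullet(G_2,h)$.

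The main obstacle I anticipate is twofold. First, pinning down the precise cokernel of $\pi_6(S^6)\to\pi_5(\SU(3))$ and confirming that the relevant obstruction to splitting (or to $d_6$) involves no primes beyond $\{2,3,5\}$; this requires the Mimura--Toda calculations of the low-dimensional homotopy of $G_2$, and one must be careful that the relevant group is $\pi_5(\SU(3))\cong\bZ$ rather than a larger unstable group, and track the order of the torsion cokernel. Second, the extension problem on $E_\infty$: unlike the $\SU(n+1)$ and $\Sp(n)$ cases where the presence of extra exterior generators made the answer manifestly a tensor product, here the $E_\infty$-page in each parity is a single $\bZ/h$, and one must argue --- e.g.\ by a module structure over $K^\bullet(\pt)$, or by naturality under $\SU(2)\hookrightarrow \SU(3)\hookrightarrow G_2$ and comparison with the $\SU(2)$ answer, or simply by the fact that $K^0$ and $K^1$ have the same order so the extension in $K^{\text{even}}$ must be $\bZ/h$ rather than, say, $0$ --- that no nontrivial extension occurs, yielding $K^\bullet(G_2,h)\cong\bZ/h$ in both parities. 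Once these two points are handled, the conclusion follows exactly as in the proofs of Theorems \ref{thm:twistedKofSU} and \ref{thm:twistedKofSp}.
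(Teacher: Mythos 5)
There are two genuine errors here, and together they sink the argument. First, you have misapplied Theorem \ref{thm:twistedKofSU} to the fiber: for $\SU(3)$ one has $n=2$, so for $h$ odd the theorem gives $K^\bullet(\SU(3),h)\cong \bZ/h\otimes\bigwedge(x_1)$ with $x_1$ an \emph{odd} generator, i.e.\ $\bZ/h$ in \emph{both} parities --- not $\bZ/h$ concentrated in odd degree (that is the $\SU(2)$ answer). With the correct fiber $K$-theory, your $E_2$-page has $\bZ/h$ in bidegrees $(0,q)$ and $(6,q)$ for every $q$, so if the spectral sequence collapsed as you claim, each of $K^{\mathrm{even}}(G_2,h)$ and $K^{\mathrm{odd}}(G_2,h)$ would have order $h^2$, contradicting the statement you are trying to prove. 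Second, your own exact sequence refutes the collapse: since $\pi_5(G_2)=0$ and $\pi_6(G_2)$ is finite while $\pi_6(S^6)\cong\pi_5(\SU(3))\cong\bZ$, exactness of
\[
\pi_6(G_2)\to\pi_6(S^6)\to\pi_5(\SU(3))\to\pi_5(G_2)=0
\]
forces the connecting map $\pi_6(S^6)\to\pi_5(\SU(3))$ to be an \emph{isomorphism}. The clutching class of $\SU(3)\to G_2\to S^6$ therefore generates an infinite cyclic group and never becomes torsion after inverting finitely many primes; the bundle is rationally nontrivial (compare $H^\bullet(G_2,\bQ)=\bigwedge(x_3,x_{11})$ with $\bigwedge(x_3,x_5)\otimes\bigwedge(x_6)$), and the correct conclusion is that $d_6$ must be nonzero --- indeed an isomorphism on one parity and zero on the other --- which is a much harder thing to establish than a vanishing.

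The paper sidesteps both problems by choosing a different fibration, $\SU(2)\to G_2\to V_{7,2}$ (Borel), with $V_{7,2}$ the Stiefel manifold of $2$-frames in $\bR^7$. There the fiber's twisted $K$-theory really is $\bZ/h$ concentrated in odd degree, the base becomes $S^{11}$ after inverting $2$, and the relevant obstruction lands in $\pi_{10}(\SU(2))\cong\bZ/15$, which is finite; so the map $\pi_{11}(G_2)\to\pi_{11}(V_{7,2})$ is surjective after inverting $3$ and $5$ and the Segal spectral sequence genuinely collapses, with no extension problem since the two surviving groups sit in opposite total parities. If you want to salvage your route through $S^6$, you would have to compute the nonzero $d_6$ explicitly rather than argue for its vanishing.
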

\begin{proof}
This proceeds similarly, using the homotopy sequence of the fibration
\cite[Lemme 17.1]{MR0064056}
\[
\SU(2) \to G_2 \to V_{7,2}.
\]
The Stiefel manifold $V_{7,2}$ is $11$-dimensional and
has only one nontrivial homology group below the top
dimension, namely a $\bZ/2$ in dimension $5$.
The restriction map $H^3(G_2)\to H^3(\SU(2))$ is an isomorphism,
so we can apply the Segal spectral sequence
and we get a spectral sequence 
\[
E_2^{p,q} = H^p(V_{7,2}, K^q(\SU(2), h)) \Rightarrow K^\bullet(G_2, h).
\]
Here $K^q(\SU(2), h)\cong \bZ/h$ for $q$ odd and is $0$ for
$q$ even.  If $h$ is odd, $E_2^{p,q}$ is non-zero
only for $p=11$ and $q$ odd, and $d_{11}$ is the only possible
differential.  However, we have the long exact homotopy sequence
\[
\pi_{11}(G_2) \to \pi_{11}(V_{7,2}) \to \pi_{10}(\SU(2)) \to \pi_{10}(G_2),
\]
and since $\pi_{10}(\SU(2)) \cong \bZ/15$, the map
$\pi_{11}(G_2) \to \pi_{11}(V_{7,2})$ is surjective after inverting $3$ and $5$.
Furthermore, after inverting $2$, $V_{7,2}$ becomes homotopy equivalent
to $S^{11}$ by the Hurewicz Theorem modulo 
the Serre class of $2$-primary torsion groups.
So when $h$ is prime to $2$, $3$, and $5$, the spectral sequence
is just as for $S^3\times S^{11}$, 
must collapse, and gives the result.
\end{proof}

Similar techniques can also be used to compute twisted $K$-theory
in some cases for non-simply connected groups.  Here is a representative
example:
\begin{theorem}
\label{thm:twistedKPSU3}
Let $h\in \bZ$, identified with $H^3(\PSU(3))$. Then if
$\gcd(h,3)=1$,
\[
K^\bullet(\PSU(3),h)\cong K^\bullet(\SU(3),h)\cong
\begin{cases}\bZ/h\otimes \textstyle{\bigwedge}(x_1), & h\text{ odd},\\
\bZ/(h/2)\otimes \textstyle{\bigwedge}(x_1), & h\text{ even}.  
\end{cases}
\]
On the other hand, if $h=3$, then both $K^{\text{even}}(\PSU(3),h)$ and
$K^{\text{odd}}(\PSU(3),h)$ are elementary abelian $3$-groups of order $27$.
\end{theorem}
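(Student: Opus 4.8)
The plan is to exploit the Segal spectral sequence of Theorem~\ref{thm:SegalSS} applied to the lower fibration in~\eqref{eq:fibs},
\[
\SU(2)\to \PSU(3)\to L^5(3),
\]
together with what we already know about $K^\bullet(\SU(2),h)$ from Example~\ref{ex:SU2}. For $h=3$ we have $K^q(\SU(2),h)\cong \bZ/3$ for $q$ odd and $0$ for $q$ even. The relevant base is the lens space $L^5(3)$, whose integral cohomology is $\bZ$ in degrees $0$ and $5$ and $\bZ/3$ in degrees $2$ and $4$ (and $0$ elsewhere), so $H^p(L^5(3),\bZ/3)\cong \bZ/3$ for $p=0,1,2,3,4,5$. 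Since the coefficient system $K^q(\SU(2),h)$ is already a $\bZ/3$-vector space, the $E_2$-page consists of a single row (in each parity) of six copies of $\bZ/3$, giving total rank $6$ in each degree before differentials. First I would identify the restriction $H^3(\PSU(3))\to H^3(\SU(2))$: by Theorem~\ref{thm:maponH3}(1), $\pi^*\colon H^3(\PSU(3))\to H^3(\SU(3))$ is an isomorphism, and $H^3(\SU(3))\to H^3(\SU(2))$ is an isomorphism for dimensional reasons, so $\iota^*h$ is the generator times $h$, i.e. the twist on the fiber really is ``$h$'' as claimed.

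Next I would analyze the differentials. The potential differentials on the single-row $E_2$-page are $d_3$ (twist-related, essentially cup product with $\iota^*h$ composed with $\Sq^3$-type operations) and $d_5$ (the one tied to the homotopical nontriviality of the fibration and to the lens-space structure). Because the fiber twisted $K$-theory is concentrated in a single parity, $d_3$ lands in the wrong parity within a single row and must vanish; the only room left is $d_5\colon E_5^{0,q}\to E_5^{5,q-4}$, i.e. a map $\bZ/3\to\bZ/3$ from the fiber class against the top cell of $L^5(3)$. The claim that $K^{\mathrm{even}}$ and $K^{\mathrm{odd}}$ each have order $27=3^3$ forces this $d_5$ to be \emph{surjective} (hence an isomorphism on that $\bZ/3$), killing off $E_\infty^{0,\bullet}$ and $E_\infty^{5,\bullet}$ and leaving exactly the three middle columns $p=1,2,3,4$ — wait, that is four columns; rather, after the surjective $d_5$ kills $p=0$ and $p=5$, one is left with $p\in\{1,2,3,4\}$, but $d_5$ from column $1$ or into column $4$ must also be considered: $d_5\colon E_5^{1,q}\to E_5^{6,q-4}=0$ and nothing maps into $E_5^{4,q}$, so those survive. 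That would give order $3^4$, not $3^3$, so in fact I expect a \emph{second} nonzero differential — most plausibly $d_3$ after all, using that $\Sq^3$ need not vanish on $L^5(3)$ relative to the twist — reducing the four surviving columns to three. The honest bookkeeping here, matching the stated answer $3^3$, is the first thing to pin down.

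The main obstacle is therefore determining precisely which differentials are nonzero and computing them, rather than merely bounding ranks. Concretely I would: (i) compute $\Sq^3$ (equivalently the relevant Steenrod operation governing $d_3$ in the twisted AHSS, via Atiyah--Segal \cite{MR2172633}) on $H^\bullet(L^5(3);\bZ/3)$, using the known Bockstein and $\cP^1$ structure of lens spaces, to see whether $d_3$ contributes; (ii) compute $d_5$ by comparison with the Serre spectral sequence morphism to the $\SU(3)$ fibration $\SU(2)\to\SU(3)\to S^5$ already displayed in~\eqref{eq:fibs}, where the analogous $d_5$ is controlled by the clutching/homotopy class $\pi_5(B\SU(2))\cong\pi_4(\SU(2))\cong\bZ/2$ — but pulled back to $L^5(3)$ the degree-$3$ identification of top cells changes this, and the relevant element of $[L^5(3),B\SU(2)]$ must be computed; (iii) once $E_\infty$ is identified as three copies of $\bZ/3$ in each parity, check there are no extension problems, which is automatic since everything in sight is killed by $3$. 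As a consistency check I would verify against the $\gcd(h,3)=1$ case already stated: there the twist-vanishing on $H^3$ arguments (cf.\ the AHSS for $\SU(3)$ with $h=3$ giving $K^\bullet(\SU(3),3)=0$, established in Example~\ref{ex:SU3}) show the $\PSU(3)$ answer cannot simply be $K^\bullet(\SU(3),h)$, confirming that the lens-space $d_5$ genuinely behaves differently from the sphere $d_5$.
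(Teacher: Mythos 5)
Your overall strategy --- the Segal spectral sequence for $\SU(2)\to\PSU(3)\to L^5(3)$ with $K^q(\SU(2),3)\cong\bZ/3$ concentrated in odd $q$ --- is exactly the paper's, but the execution contains a counting error that derails the rest. Since the fiber groups vanish for $q$ even, a fixed parity of the total degree $p+q$ receives contributions only from the three columns $p$ of the matching parity ($p\in\{1,3,5\}$ for $K^{\text{even}}$ and $p\in\{0,2,4\}$ for $K^{\text{odd}}$), not from all six. Hence the $E_2$-page already has order exactly $27$ in each parity, and the theorem requires showing the spectral sequence \emph{collapses}, not hunting for nonzero differentials; your proposed surjective $d_5$ and possibly nonzero $d_3$ would produce the wrong answer, and in fact both vanish. (Note also that $d_3\co E_3^{p,q}\to E_3^{p+3,q-2}$ shifts $q$ by $2$ and so preserves the parity of $q$; it does not ``land in the wrong parity'' and cannot be dismissed on those grounds.) The paper kills $d_3$ by showing that the bottom stage of the Postnikov tower of $\PSU(3)$ splits as $K(\bZ,3)\times K(\bZ/3,1)$ --- because the class corresponding to $y_2^2$ is nonzero in $H^4(\PSU(3),\bF_3)$, the $k$-invariant vanishes --- so that in the relevant range $\PSU(3)$ behaves like $S^3\times L^5(3)$; it kills $d_5$ by comparing, via the covering diagram \eqref{eq:fibs}, with the spectral sequence for $\SU(2)\to\SU(3)\to S^5$, where the corresponding $d_5$ vanishes for odd $h$.

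Separately, your assertion that the extension problem is ``automatic since everything in sight is killed by $3$'' is not correct: an $E_\infty$-page consisting of $\bZ/3$'s is compatible with $\bZ/9$ or $\bZ/27$ in the abutment, and ``elementary abelian'' is precisely what is at stake. The paper resolves this by upgrading the collapse to an identification $K^\bullet(\PSU(3),3)\cong K^{\bullet+1}(L^5(3),\bF_3)$ and then invoking the splitting of the universal coefficient sequence for $K$-theory with $\bF_3$ coefficients together with Kambe's result that $\widetilde K^0(L^5(3))$ is elementary abelian. Finally, your concluding consistency check relies on $K^\bullet(\SU(3),3)=0$, which is false: by \eqref{eq:BraunDouglas} one has $h'=3/\gcd(3,2)=3$, so $K^\bullet(\SU(3),3)\cong\bZ/3$ in each parity (Example \ref{ex:SU3} treats $h=1$, not $h=3$).
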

\begin{proof}
Since we've seen (via the diagram \eqref{eq:fibs} and Theorem
\ref{thm:maponH3}(1))
that the restriction map $H^3(\PSU(3))\to H^3(\SU(2))$
(coming from \eqref{eq:fibs}) is an isomorphism, we get from
\eqref{eq:fibs} a Segal spectral sequence
\begin{equation}
H^p(L^{5}(3), K^q(\SU(2),h)) \Rightarrow K^{p+q}(\PSU(3), h).  
\label{eq:twistedKPSU3}
\end{equation}
Here we know that $K^q(\SU(2),h)$ is $\bZ/h$, concentrated in odd degree.
There are now various cases.  If $h$ is prime to $3$, then the lens space
$L^{5}(3)$ looks like $S^5$ mod $h$, and the spectral sequence becomes
the same as for $K^\bullet(\SU(3), h)$, for which we know the answer
by \cite{MR2079376,MR1877986,MR2061550,MR2263220}.

So consider the case where $h=3$.  In this case, $H^\bullet(L^{5}(3),\bF_3)
\cong \bF_3[x_1, y_2]/(x_1^2, y_2^3)$ with $\beta x_1=y_2$.  So
the $E_2$ stage of \eqref{eq:twistedKPSU3} gives groups of order
$27$ in both even and odd degree.  So it will suffice to show
that the spectral sequence collapses at $E_2$. Since
$E_2^{p,q}$ is only non-zero for $q$ odd and for $p\le 5$, the only
possible differentials are $d_3$ and $d_5$.  We consider them one at
at time.  Both differentials would vanish if $\PSU(3)$ looked like
$S^3\times L^{5}(3)$, and have to do with nontriviality of the fibration
in \eqref{eq:fibs}.  So how nontrivial is it?  The homotopy groups
of $\PSU(3)$ are given by $\pi_j(\PSU(3)) \cong \bZ/3, j=1$;
$0, j=2$; $\bZ, j=3$; $0, j=4$; and $\bZ, j=5$.  Since $\PSU(3)$ is a
topological group, it is certainly simple (in the sense of homotopy
theory) and has a Postnikov system
even though it isn't simply connected.  The first nontrivial stage of this
Postnikov system is a fibration $K(\bZ,3) \to X_1 \to K(\bZ/3, 1)$.
The next stage is a fibration $K(\bZ,5) \to X_2 \to X_1$.  The
$k$-invariant of $X_1$ lies in $H^4(K(\bZ/3, 1))\cong \bZ/3$, and
corresponds to the transgression $d_4\co H^3(K(\bZ,3))\to
H^4(K(\bZ/3, 1))$ in the Serre spectral sequence for the fibration
defining $X_1$.  But $H^\bullet(K(\bZ,3),\bF_3)\cong  \bF_3[x_1, y_2]/(x_1^2)$
with $\beta x_1=y_2$, which agrees with the cohomology of $ L^{5}(3)$
up to degree $5$, while by \eqref{eq:cohomPSU}, the element corresponding
to $y_2^2$ is nonzero in $H^4(\PSU(3),\bF_3)$.  Thus $d_4$ has to
vanish and $X_1\simeq K(\bZ,3) \times K(\bZ/3, 1)$.  This means
$\PSU(3)$ closely approximates $K(\bZ,3) \times K(\bZ/3, 1)$ through
dimension $4$.  Hence as far as $d_3$ in the Segal spectral sequence
is concerned, we might as well have $\PSU(3)\simeq S^3\times L^{5}(3)$,
which causes $d_3$ to vanish.  Vanishing of $d_5$ then comes
from comparison of spectral sequences; the
diagram \eqref{eq:fibs} plus the fact that there
are no differentials before $d_5$ implies that $d_5$ for
$K^\bullet(\PSU(3),h)$ must agree with $d_5$ for $K^\bullet(\SU(3),h)$,
which we know must vanish if $h$ is odd.  So the twisted $K$-groups
in this case are of order exactly $27$.

Note that even though the Segal spectral sequence degenerates, there
is still an extension problem in going from $E_\infty$ to the actual
twisted $K$-groups.  In the case $h=3$,
one can solve the extension problem by noting that the method of
proof really showed something stronger, namely that
\[
K^\bullet(\PSU(3),3)\cong K^\bullet(\SU(2)\times L^5(3),3)
\cong K^{\bullet+1}(L^5(3), \bF_3).
\]
This turns out to be elementary abelian, for two reasons.
First, the universal coefficient theorem for $K$-theory with
$\bF_p$ coefficients splits, unlike what happens for
$KO$-theory with $\bF_2$ coefficients. (See \cite[\S2]{MR0182967}
for an explanation.)  And secondly, the complex $K$-theory
$\widetilde K^0(L^5(3))$ turns out to be elementary abelian, unlike
what happens for certain other lens spaces.  For
an explanation of this, see \cite{MR0198491} and especially the
\emph{Mathematical Reviews} review of this paper by Hirzebruch.
\end{proof}
\begin{remark}
\label{rem:extensions}
The final result that, when $h=3$,
the twisted $K$-groups in both even and odd degree
are $(\bZ/3)^3$ was obtained previously in \cite[(2.20)]{MR2080884} from
the physics perspective of D-brane charges in WZW theories
for $\PSU(3)$.  More complicated calculations for $\PSU(9)$,
where the result is not so simple, appear in \cite{MR2285965}.

Theorem \ref{thm:twistedKPSU3} actually proves more: if $h=3k$ with
$k$ odd and relatively prime to $3$, then $K^\bullet(H, h)\cong
(\bZ/3)^3 \oplus (\bZ/k)$, and if $h=3k$ with
$k$ even and relatively prime to $3$, then $K^\bullet(H, h)\cong
(\bZ/3)^3 \oplus (\bZ/(k/2))$.  This follows from the fact
that the $p$-primary parts of the Segal spectral sequence for
different values of $p$ do not interfere with each other.
\end{remark}
Note that exactly the same method as for Theorem
\ref{thm:twistedKPSU3} proves:
\begin{theorem}
\label{thm:nonsc}
Let $G=\PSU(n+1)$ or $\PSp(n)$, $n\ge 2$, with universal cover
$\tG=\SU(n+1)$ or $\Sp(n)$.  Suppose $\gcd(h,n+1)=1$
{\lp}in the case of $\PSU(n+1)${\rp} or $h$ is odd {\lp}in the case
of $\PSp(n)${\rp}.  Then 
$\pi^*\co K^\bullet(G,h) \to K^\bullet(\tG, \pi^*h)$
is an isomorphism of twisted $K$-theory groups.
\end{theorem}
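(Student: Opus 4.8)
The plan is to run the same Segal spectral sequence comparison that proved Theorem~\ref{thm:twistedKPSU3}, now invoking Theorem~\ref{thm:maponH3} to keep track of the twists. Set $c=n+1$ when $G=\PSU(n+1)$ and $c=2$ when $G=\PSp(n)$, so that $c=|\pi_1(G)|$ is the degree of $\pi\co\tG\to G$; in the first case I would use the morphism of fibrations \eqref{eq:fibs}, with common fibre $F=\SU(n)$ and covering of bases $S^{2n+1}\to L^{2n+1}(n+1)$, and in the second case \eqref{eq:fibs1}, with common fibre $F=\Sp(n-1)$ and covering of bases $S^{4n-1}\to\bR\bP^{4n-1}$. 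Since the fibre of the bottom row is literally the image under $\pi$ of the fibre of the top row, the fibre inclusion $\iota\co F\hookrightarrow G$ factors as $F\xrightarrow{\tilde\iota}\tG\xrightarrow{\pi}G$, so the twist induced on $F$ is $\iota^*h=\tilde\iota^*(\pi^*h)$ and the two Segal spectral sequences of Theorem~\ref{thm:SegalSS} for the two rows share the fibre coefficient object $q\mapsto K^q(F,\iota^*h)$. These are principal $F$-bundles and left translations of $F$ are homotopic to the identity, so the coefficients form a genuinely constant system on each base; the map $\pi$ then induces a morphism of spectral sequences which on $E_2$ is $\pi^*\co H^p(B;K^q(F,\iota^*h))\to H^p(\widetilde B;K^q(F,\iota^*h))$.

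The heart of the matter is that $c$ is invertible on $K^\bullet(F,\iota^*h)$. Since $\tilde\iota^*\co H^3(\tG)\to H^3(F)$ is an isomorphism and $\pi^*\co H^3(G)\to H^3(\tG)$ is multiplication by some $\mu\in\{1,2\}$ by Theorem~\ref{thm:maponH3}(1),(2), we have $\iota^*h=\mu h$ under the canonical identifications, and $\mu=2$ can occur only when Theorem~\ref{thm:maponH3} gives multiplication by $2$, which forces $c$ even, $h$ odd, and $F$ at least $\SU(5)$ or $\Sp(2)$. By the calculations of \cite{MR2061550,MR2263220} (the formula \eqref{eq:BraunDouglas} and its symplectic analogue, as used in Theorems~\ref{thm:twistedKofSU} and~\ref{thm:twistedKofSp}), $K^\bullet(F,\mu h)\cong(\bZ/h'')\otimes\bigwedge(\cdots)$ with $h''=\mu h/\gcd(\mu h,D)$, where $D$ is the relevant denominator (an $\lcm$ of certain factorials in the symplectic case, or $\lcm(1,\dots,n-1)$ in type $A$). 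If $\mu=1$ then $h''\mid h$ trivially; if $\mu=2$ then $2\mid D$ because $F$ is large (e.g.\ $12\mid D$ symplectically, $2\mid\lcm(1,\dots,n-1)$ in type $A$) while $h$ is odd, so the extra factor of $2$ is absorbed and again $h''\mid h$. As $\gcd(h,c)=1$ by hypothesis, it follows that $\gcd(h'',c)=1$, i.e.\ $K^\bullet(F,\iota^*h)$ is finite of exponent prime to $c$.

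With this, I would finish just as in Theorem~\ref{thm:twistedKPSU3}: the deck group $\bZ/c$ of $\widetilde B\to B$ acts trivially on the (constant) coefficient system upstairs --- a generator lifts to translation by a central element of $\tG$, which is homotopic to the identity --- and it preserves the orientation of the odd sphere $\widetilde B$, so it acts trivially on $H^p(\widetilde B;K^q(F,\iota^*h))$; since $|\bZ/c|$ is prime to the exponent of these coefficients, the Cartan--Leray transfer argument shows $\pi^*$ is an isomorphism on $E_2$ for every $p,q$. Hence the morphism of Segal spectral sequences is an isomorphism on $E_2$, so on $E_\infty$, so (the filtrations being finite) on the abutments, giving $\pi^*\co K^\bullet(G,h)\xrightarrow{\cong}K^\bullet(\tG,\pi^*h)$. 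Equivalently, one can package the last step as the transfer identities $\pi_!\pi^*=c\cdot\id$ and $\pi^*\pi_!=\sum_{g\in\bZ/c}g^*=c\cdot\id$ on the twisted $K$-groups, both invertible because $c$ is prime to the exponent of $K^\bullet(G,h)$ (which divides $h''$ by the spectral sequence).

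I expect the only real obstacle to be bookkeeping: correctly reading off $\iota^*h=\mu h$ from Theorem~\ref{thm:maponH3}, and checking in the two $\mu=2$ families that the Braun--Douglas-type denominator really does swallow the extra factor of $2$, so that the exponent of the fibre's twisted $K$-theory remains coprime to $c$. The rest --- the morphism of Segal spectral sequences, triviality of the deck action, and the transfer collapse --- is formal, and one never needs to compute the groups or resolve any extension problem.
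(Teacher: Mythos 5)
Your proposal is correct and follows essentially the same route as the paper's proof: a morphism of Segal spectral sequences induced by the commuting fibration diagrams \eqref{eq:fibs} and \eqref{eq:fibs1}, identification of the fibre twist via Theorem \ref{thm:maponH3}, the Braun--Douglas computation to see that the exponent of $K^\bullet(F,\iota^*h)$ is coprime to the covering degree (including the check that the possible extra factor of $2$ in $\pi^*h$ does not change this), and the comparison theorem. The only cosmetic difference is that you justify the $E_2$-isomorphism by an explicit Cartan--Leray/transfer argument, where the paper simply observes that the torsion orders of the coefficients and of the lens space (or $\bR\bP^{4n-1}$) are coprime.
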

\begin{proof}
We indicate the details for $\PSp(n)$; the other case is analogous.
Consider the commuting diagram of fibrations \eqref{eq:fibs1},
which gives a morphism of spectral sequences
\[
\xymatrix{H^p(\bR\bP^{4n-1}, K^q(\Sp(n-1), \iota^*h)) 
\ar@{=>}[r] \ar_{\pi^*}[d] & K^{p+q}(\PSp(n), h) \ar_{\pi^*}[d]\\
H^p(S^{4n-1}, K^q(\Sp(n-1), \iota^*\pi^*h)) 
\ar@{=>}[r] & K^{p+q}(\Sp(n), \pi^*h),}
\]
with $\iota^*$ the pull-back to the fiber.  Note that by Theorem
\ref{thm:maponH3}, $\pi^*h=h$ or $2h$.  But in either event,
by the results of \cite{MR2061550,MR2263220},
since $h$ is odd, $K^\bullet(\Sp(n-1), \pi^*h)$ is $\bZ/h'\otimes
\bigwedge(x_1,\cdots x_{n-2})$, with $x_j$ of odd degree,
and with $h'$ \emph{odd}.  (Furthermore, the value of $h'$ is 
independent of whether or not $h$ is multiplied by $2$.) So
\[
\pi^*\co H^p(\bR\bP^{4n-1}, K^q(\Sp(n-1), \iota^*h)) 
\to H^p(S^{4n-1}, K^q(\Sp(n-1), \iota^*\pi^*h))
\]
is an isomorphism for all $p$ and $q$.  The result now follows
by the comparison theorem for spectral sequences.
In the case of $\PSU(n+1)$, we use the diagram \eqref{eq:fibs};
note that $\pi^*=1$ if $n$ is even
and is $2$ if $n$ is odd.  But in that case $h$, being relatively prime
to $n+1$, is odd.   In
all cases, the order of the torsion in $K^\bullet(\SU(n), \iota^*h)$
is prime to the order of the torsion in the cohomology of the lens
space $L^{2n+1}=S^{2n+1}/\mu_{n+1}$.
 So everything works the same way.
\end{proof}

\subsection{Level-rank dualities}
\label{sec:levelrank}
To conclude this section, we finally mention level-rank dualities,
which are certain mysterious relations between WZW theories
that involve swapping a parameter for the group with one for the
Virasoro level.  (See, e.g., \cite{MR1065265,MR1079689,MR1261989,
MR1285882,MR2215500,MR2263322}.)
This is quite a deep subject and does seem to
be connected with twisted $K$-theory calculations, though
not in a totally straightforward way. In \cite[\S6]{MR2061550}
Braun calls this ``level-rank nonduality,'' but this seems to be too harsh,
as even in Braun's ``nonduality'' examples such as
$B_2$ at level $1$ and $E_8$ at level $2$
(with only a few exceptions, such as $G_2$ at level $1$ \cite[\S4]{MR2061550},
where the twisted $K$-theory vanishes), the \emph{order of
the torsion} in the associated twisted $K$-groups is the same,
as it is this order that gives the dimension of the associated fusion
rings, which are equal for the dual theories.

A few examples will illustrate this point.  The most basic case of
level-rank dualities is between WZW theories on $\SU(n)$ at
level $k$ and $\SU(k)$ at level $n$ \cite{MR1079689}.
(This has evolved into the ``strange duality'' conjecture
proved in \cite{MR2289865,MR2350055,MR2520799}.)
While these theories might at first seem to
have nothing to do with one another, there is one immediate 
connection.  The associated twisting class in both
cases is $n+k$ (level plus dual Coxeter number), and if this
is relatively prime to both $(n-1)!$ and $(k-1)!$ (for
example, in the case of $\SU(3)_4$ and $\SU(4)_3$),
it follows immediately from Theorem \ref{thm:twistedKofSU}
that the order of the torsion in the two twisted $K$-groups
is the same.  Thus we have a very quick way of
verifying this in certain cases.  In some cases the duality
descends to the adjoint group (cf.\ \cite[(9.3)]{MR1261989});
the order of the torsion is also the same for
$\PSU(3)_4$ and $\PSU(4)_3$, since the twisting
is by $3+4=7$ in both cases and we can apply
Theorem \ref{thm:nonsc}.

\section{Orientifold dualities for Lie groups with involution}
\label{sec:orient}

This section of the paper was motivated by the observation (see
for example \cite{GaoHori,MR3267662, MR3316647}) that there are interesting
examples of T-dualities between orientifold
string theories on Calabi-Yau manifolds with holomorphic and
anti-holomorphic involutions (in types IIB and IIA, respectively).
Such dualities
of orientifold theories are expected to result in isomorphisms of
(possibly twisted)
$KR$-groups, with a degree shift that depends on the number of
circles in which one dualizes.  If we interpret the term
``Calabi-Yau manifold'' in the broadest sense,
as a complex manifold, not necessarily compact or simply connected, with 
vanishing first Chern class, then complex Lie groups provide plenty of examples,
since their holomorphic tangent bundles are parallelizable (and thus certainly
have vanishing Chern classes).  Furthermore, semi-simple Lie groups are the 
natural setting for WZW models in string theory and
conformal field theory, so they provide an obvious case of interest.
We further specialize to the case where the orientifold involution is a
group automorphism.  While this is not the only possibility, and other
literature on WZW orientifolds
(for example \cite{MR1888950,MR1900127,MR1955442,MR2118829,MR2443297})
deals with the case of group anti-automorphisms
(or twists thereof), the automorphism case certainly
seems like a natural case to consider, and it could be that
orientifold theories on a group $G$ with a group involution $\iota$ and
$G^\iota=H$ are related to ``coset theories'' on $G/H$.

The T-dualities we exhibit here are motivated by examples of ``group
duality'' (such as Matsuki duality) that occur in representation theory.
We begin with a proposition which is basically well-known.
\begin{proposition}
Let $G$ be a connected semi-simple complex Lie group, and let $G_0$
be a real form of $G$ {\lp}so that $G_0=G^\iota$ for some anti-holomorphic
involution $\iota$ of $G${\rp}.  Let $\theta$ be the Cartan involution of
$G_0$, with fixed-point subgroup $K_0=G_0^\theta$.
Then $\theta$ can be chosen to commute with $\iota$ and to extend
to a holomorphic involution of $G$ {\lp}which we will again denote
by $\theta${\rp} commuting with $\iota$.
\label{prop:invs}
\end{proposition}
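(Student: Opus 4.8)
The plan is to reduce the whole assertion to the level of Lie algebras, where it is classical, and then lift the resulting involutions carefully to the group $G$ itself. Write $\fg=\operatorname{Lie}(G)$, a complex semisimple Lie algebra, and let $\sigma=d\iota$ be the conjugation of $\fg$ with $\fg_0=\operatorname{Lie}(G_0)=\fg^\sigma$. The classical input (Helgason, \emph{Differential Geometry, Lie Groups and Symmetric Spaces}, Ch.~III, or Knapp, \emph{Lie Groups Beyond an Introduction}) is that one can choose a compact real form $\fu\subset\fg$, with conjugation $\tau$, so that $\sigma\tau=\tau\sigma$. Set $\theta_0:=\sigma\tau$, a $\bC$-linear involutive automorphism of $\fg$ commuting with $\sigma$; writing $\fg_0=\fk_0\oplus\fp_0$ for the $\pm 1$ eigenspace decomposition of $\theta_0|_{\fg_0}$, one checks (also classical) that $\fu=\fk_0\oplus i\fp_0$, so the Killing form of $\fg_0$ is negative definite on $\fk_0$ and positive definite on $\fp_0$, i.e.\ $\theta_0|_{\fg_0}$ is a Cartan involution of $\fg_0$. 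So infinitesimally there is nothing to do; the content is integration.

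Next I would integrate $\tau$ to $G$. Since $\tau$ is an antilinear involutive automorphism of $\fg$, it integrates to a unique antiholomorphic involution $\widetilde\iota_c$ of the universal cover $\tG$. Because $\tau|_{\fu}=\operatorname{id}$ and the analytic subgroup $\widetilde U\subseteq\tG$ with Lie algebra $\fu$ is connected, $\widetilde\iota_c$ restricts to the identity on $\widetilde U$. Now $\widetilde U$ is the compact real form of $\tG$, with $Z(\widetilde U)=Z(\tG)$; hence $\widetilde\iota_c$ fixes $Z(\tG)$, and a fortiori the covering kernel $\ker(\tG\to G)\subseteq Z(\tG)$, pointwise. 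Therefore $\widetilde\iota_c$ descends to an antiholomorphic involution $\iota_c$ of $G$, with $G^{\iota_c}=U$ a compact real form. Since $\iota$ is already an antiholomorphic involution of $G$, I would then note that $\iota$ and $\iota_c$ commute: the two lifts $\widetilde\iota\,\widetilde\iota_c$ and $\widetilde\iota_c\,\widetilde\iota$ to $\tG$ are automorphisms with the same differential $\sigma\tau=\tau\sigma$, hence coincide (a connected, simply connected Lie group admits no nontrivial automorphism with trivial differential), and this identity descends to $G$.

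Finally, put $\theta:=\iota\circ\iota_c=\iota_c\circ\iota$. As a composition of two commuting antiholomorphic involutions, $\theta$ is a holomorphic involution of $G$; one has $\iota\theta=\iota_c=\theta\iota$, so $\theta$ commutes with $\iota$ and in particular preserves $G_0=G^\iota$; and $d\theta=\sigma\tau=\theta_0$, so $\theta|_{G_0}$ is an involutive automorphism of $G_0$ whose differential $\theta_0|_{\fg_0}$ is a Cartan involution of $\fg_0$, i.e.\ a Cartan involution of $G_0$ in the relevant sense. (If $G^\iota$ happens to be disconnected, one simply observes that $\theta|_{G_0}$ is still an involution of $G_0$ whose differential is the Cartan involution of the Lie algebra, which is what is meant in the statement.)

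The step I expect to require genuine care — rather than just citation — is the descent of $\widetilde\iota_c$, and of the commutation identity, from $\tG$ down to $G$: a general antiholomorphic involution of $\tG$ need not descend to an arbitrary central quotient, and the point that makes it work here is precisely that the compact-form conjugation is trivial on the subgroup $\widetilde U$, which contains $Z(\tG)$ and hence every covering kernel. Everything else is either the classical compatible-compact-form lemma or the routine eigenspace bookkeeping sketched in the first paragraph.
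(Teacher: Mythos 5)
Your argument is correct, and it actually does more than the paper, which disposes of this proposition in one line by calling it standard and citing the literature (Matsuki's paper). Your route is the expected one: the compatible compact real form $\fu$ with $\sigma\tau=\tau\sigma$ at the Lie algebra level, then $\theta_0=\sigma\tau$ as the Cartan involution, then integration. The genuinely nontrivial point you identify --- that the antiholomorphic conjugation $\tau$ integrates on $\tG$ and descends to an arbitrary connected quotient $G$ because it is the identity on the compact form $\widetilde U$, which contains $Z(\tG)$ and hence every covering kernel --- is handled correctly, and the commutation of $\iota$ with $\iota_c$ via uniqueness of automorphisms of a connected group with a given differential is fine. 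Two small remarks: you only need $Z(\tG)\subseteq\widetilde U$ (immediate since the finite center sits in the compact part of a maximal torus), not the full equality $Z(\widetilde U)=Z(\tG)$; and your parenthetical about possible disconnectedness of $G^\iota$ is worth keeping, since the paper implicitly treats $G_0$ as a real form in the usual (possibly disconnected) sense and only ever uses the Lie-algebra-level Cartan decomposition in the subsequent Theorem \ref{thm:invduality}. In short, your write-up could serve as the proof the paper omits.
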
  
\begin{proof}
As we mentioned, this is standard; see for example \cite{MR513481}.
\end{proof}
\begin{theorem} Let $G$, $G_0$, $K_0$, $\iota$, and $\theta$ be as
in Proposition \ref{prop:invs}.  Then there is a natural isomorphism
$KR^{\bullet}(G, \iota)\cong KR^{\bullet+2(m-n)}(G, \theta)$, where
$m = \dim G_0 - \dim K_0$, $n=\dim K_0$.
The isomorphism can be viewed as coming from a T-duality of 
orientifold theories for the two involutions.
\label{thm:invduality}
\end{theorem}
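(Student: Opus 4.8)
The plan is to reduce the statement, via the Cartan (polar) decomposition of the complex group $G$, to the elementary fact that $\iota$ and $\theta$ restrict to the \emph{same} involution on the compact real form of $G$, and then to read off the degree shift from standard $KR$-periodicity.

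First I would unwind the Lie theory of Proposition~\ref{prop:invs}. Set $\fg_0=\operatorname{Lie}(G_0)$, $\fk_0=\operatorname{Lie}(K_0)$, let $\fp_0\subset\fg_0$ be the $(-1)$-eigenspace of $d\theta|_{\fg_0}$, so $\fg_0=\fk_0\oplus\fp_0$ with $\dim\fk_0=n$, $\dim\fp_0=m$, and $\fg=\fg_0\otimes_{\bR}\bC=\fg_0\oplus i\fg_0$. The composite $\iota\theta=\theta\iota$ is an involution of $G$, and computing its eigenspaces on $\fk_0,\fp_0,i\fk_0,i\fp_0$ one finds its fixed subalgebra is $\fu:=\fk_0\oplus i\fp_0$, which is a compact real form of $\fg$. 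Thus $\iota\theta$ is the conjugation of $G$ with respect to its compact real form $U$ (the connected subgroup with $\operatorname{Lie}(U)=\fu$, a maximal compact subgroup of $G$), and --- the key point --- since $\iota\theta$ fixes $U$ pointwise we obtain $\theta|_U=\iota|_U=:\sigma$.

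Second I would invoke the Cartan decomposition: $(u,X)\mapsto u\exp X$ is a diffeomorphism $U\times\fp\to G$ with $\fp:=i\fu$ ($\dim_{\bR}\fp=n+m$). Since $\iota$ and $\theta$ are group automorphisms preserving $U$, hence $\fp$, this diffeomorphism is equivariant for the $(\bZ/2)^2$-action they generate, which on $U\times\fp$ acts by an automorphism on $U$ and linearly on $\fp$. On $\fp=i\fu=i\fk_0\oplus\fp_0$ one computes that $\iota$ has $(-1)$-eigenspace $i\fk_0$ (dimension $n$) and $(+1)$-eigenspace $\fp_0$ (dimension $m$), while for $\theta$ the two eigenspaces are interchanged. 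Writing $\bR^{p,q}$ for $\bR^{p+q}$ with the involution that is $-1$ on $p$ coordinates and trivial on the remaining $q$, this yields, as Real spaces,
\[
(G,\iota)\ \cong\ (U,\sigma)\times\bR^{n,m},\qquad (G,\theta)\ \cong\ (U,\sigma)\times\bR^{m,n},
\]
with the \emph{same} factor $(U,\sigma)$ --- equivalently, $G$ is the total space of the equivariantly trivial normal bundle of the fixed set $U$, whose normal $\bZ/2$-representation is $\bR^{n,m}$ for $\iota$ and $\bR^{m,n}$ for $\theta$. Now take $KR$ with compact supports (which is the relevant version for these non-compact spaces, and the reason a degree shift appears at all). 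From the product decompositions, $KR^{\bullet}(G,\iota)$ and $KR^{\bullet}(G,\theta)$ are the reduced $KR$-groups of the Thom spaces of $\bR^{n,m}$ and $\bR^{m,n}$ over $(U,\sigma)$; by Atiyah's $(1,1)$-periodicity together with the comparison of $S^{p,0}$- and $S^{0,q}$-suspensions, smashing with $S^{p,q}$ shifts the $KR$-degree by $p-q$. Hence $KR^{\bullet}(G,\iota)\cong KR^{\bullet+(n-m)}(U,\sigma)$ and $KR^{\bullet}(G,\theta)\cong KR^{\bullet+(m-n)}(U,\sigma)$, and eliminating $KR^{\bullet}(U,\sigma)$ gives $KR^{\bullet}(G,\iota)\cong KR^{\bullet+2(m-n)}(G,\theta)$. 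For the last sentence of the theorem, this isomorphism amounts to the identity $KR^{\bullet}_c(\bR^{n,m})\cong KR^{\bullet+2(m-n)}_c(\bR^{m,n})$ performed fibrewise over $(U,\sigma)$; viewing the Cartan factor $\fp=i\fu$ as the (non-compact limit of the) fibre torus carrying the orientifold data, this is exactly the effect of T-dualising along those $n+m$ directions, which exchanges the two orientifold structures, in the sense of \cite{GaoHori,MR3267662,MR3316647}, so the isomorphism is induced by an (orientifold) T-duality.

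The heart of the argument --- and the step I expect to be the main obstacle --- is the pair of structural facts in the first two paragraphs: recognising $\iota\theta$ as the compact conjugation, so that $\iota$ and $\theta$ become the \emph{same} involution $\sigma$ after restriction to $U$, and verifying that the Cartan decomposition is genuinely $(\bZ/2)^2$-equivariant so that the product splittings above are legitimate. Everything after that is linear algebra on $i\fu$ plus bookkeeping with the standard $KR$-periodicities. The remaining technical points requiring care are: using the compactly supported rather than the representable $KR$-functor; handling the component group of $G^{\iota\theta}$ (passing to the connected compact real form, which does not affect the conclusion); fixing the sign and normalisation of the degree shift so that it reads exactly $2(m-n)$ in the paper's conventions; and making the orientifold T-duality dictionary precise enough to back up the final clause of the theorem.
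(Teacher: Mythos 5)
Your proposal is correct and follows essentially the same route as the paper: identify $\tau=\theta\iota$ as the conjugation defining the compact real form $U$ (so that $\iota$ and $\theta$ agree on $U$), use the Cartan decomposition $G\cong U\times(i\fk_0\oplus\fp_0)$ to split $(G,\iota)$ and $(G,\theta)$ as products of $(U,\sigma)$ with linear Real spaces whose signatures are swapped, and read off the degree shift from $KR$-periodicity; your extra care about compact supports and equivariance of the Cartan decomposition is welcome but not a departure. The only slip is in the last elimination step, where your own intermediate formulas yield a shift of $2(n-m)$ rather than $2(m-n)$ --- exactly the convention-dependent sign you flagged, which is harmless once the indexing convention for $\bR^{p,q}$ is fixed to match the paper's.
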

\begin{proof}
Let $\fg$ and $\fg_0$ be the Lie algebras of $G$ and $G_0$, respectively,
and let $\fk_0$ be the Lie algebra of $K_0$, $\fk$ its complexification
(the Lie algebra of the complex subgroup $K$ of $G$ given by $K=G^\theta$).
Let $\tau=\theta\iota$; since $\theta$ and $\iota$ commute, this is
also an involution, and it's anti-holomorphic since $\theta$ is holomorphic
and $\iota$ is anti-holomorphic.  Thus $U=G^\tau$ is another real form of
$G$.  Let $\fp$ be the $-1$-eigenspace of $\theta$; note that the 
the $-1$-eigenspace of $\iota$ is $i\fg_0$.  So we have the decomposition
\begin{equation}
\fg= \fg_0 + i\fg_0 = \fk_0 + i\fk_0 + \fp_0 + i\fp_0,
\label{eq:Liealgs}
\end{equation}
where $\fp_0=\fp\cap\fg_0$.  The Cartan decomposition of $G_0$ is
$G_0=K_0\exp(\fp_0)$.  The involution $\tau$ acts by $+1$ on $\fk_0$
and on $i\fp_0$, and by $-1$ on $i\fk_0$ and on $\fp_0$.
The Killing form is negative definite on $\fk_0$ and positive definite
on $\fp_0$, hence is negative definite on $\fu= \fk_0 + i\fp_0$, the
Lie algebra of $U$.  So $U$ is compact; it is the compact real
form of $G$ (which is unique up to inner automorphisms).  So the
Cartan decomposition of $G$ is $G=U\exp(i\fk_0+\fp_0)$.

With all these preliminaries, we can now understand the involutions
$\theta$ and $\iota$.  Topologically, $G= U\times (i\fk_0) \times \fp_0$,
where $i\fk_0$ and $\fp_0$ are real vector spaces of dimensions
$\dim K_0$ and $\dim G_0 - \dim K_0$, respectively.  The involutions
$\iota$ and $\theta$ both send $U$ to itself with fixed-point
subgroup $K_0$.  But $\iota$ is $-1$ on $i\fk_0$ and $+1$ on $\fp_0$,
whereas $\theta$ is $+1$ on $i\fk_0$ and $-1$ on $\fp_0$.  So
as Real spaces in the sense of Atiyah \cite{MR0206940}, but with
the opposite indexing convention as in \cite{MR1031992},
$(G,\theta) = (U,\theta)\times \bR^{n,m}$, where 
$n=\dim K_0$ and $m = \dim G_0 - \dim K_0$, while
$(G,\iota) = (U,\theta)\times \bR^{m,n}$.  So
\[
KR^\bullet (G,\theta) \cong KR^{\bullet -m + n} (U,\theta), \quad
KR^\bullet (G,\iota) \cong KR^{\bullet -n + m} (U,\theta).
\]
The result follows.
\end{proof}
\begin{remark}
\label{rem:invH}
Theorem \ref{thm:invduality} may seem too weak since it does not incorporate
the twist in $KR$-theory given by the B-field and H-flux
(explained in \cite[\S4 and \S5]{MR3267662}).  This may be remedied
by observing that the twist is always almost trivial on the Euclidean space
factors $\fk_0$ and $\fp_0$, since they
are contractible. (The Dixmier-Douady class must be trivial, and so the
only way of twisting is by changing the signs of the O-planes, which simply
shifts $KR$ in degree by $4$.)  Then as long as one uses the same twist
on $(U,\theta)$ and $(U,\iota)$, Theorem \ref{thm:invduality}  continues
to hold even in the presence of a topologically nontrivial B-field.

Another comment is that it might seem surprising at first that the
degree shift in Theorem \ref{thm:invduality} is always even, unlike the
situation for torus orientifolds in \cite{GaoHori,MR3267662, MR3316647}, 
where switching between type IIA and type IIB via a single T-duality
involves a shift of $1$.  This can be explained by the fact that here
we are dealing with a slightly different notion of T-duality ---
it is still ``target space'' duality, but we no longer have torus
bundles.  The simplest case of this new kind of 
T-duality is the duality given by the
Fourier transform, between $\bR^{1,0}$ and $\bR^{0,1}$.
(The switch in the involution comes from the fact that
the Fourier transform of a real-valued function satisfies
$\overline{f(x)} = f(-x)$, not $\overline{f(x)} = f(x)$.)  For such T-dualities
of real or complex vector spaces, the degree shift in $KR$-theory is always even.
\end{remark}

\bibliographystyle{amsplain}
\bibliography{groupdualities-2017}
\end{document}